\documentclass[amsmath,amssymb,twocolumn,10pt,aps,pre]{revtex4-1}

\usepackage{amsthm}
\usepackage{hyperref}
\usepackage{mathtools}
\usepackage{graphicx}
\usepackage[varg]{txfonts}

\newcommand{\ii}{\mathrm{i}}
\newcommand{\ff}{\mathrm{f}}
\newcommand{\e}{\mathrm{e}}
\newcommand{\g}{\mathrm{g}}
\newcommand{\tr}{\mathrm{Tr}}
\DeclareMathOperator{\dirac}{\delta_\textsc{D}}

\newcommand{\tens}[1]{\mathbf{#1}}
\newcommand{\tensgr}[1]{\boldsymbol{#1}}

\newcommand{\cvector}[1]{\left(\begin{array}{c}#1\end{array}\right)}
\renewcommand{\matrix}[2]{\left(\begin{array}{#1}#2\end{array}\right)}
\newcommand{\nexp}[1]{\mathrm{exp}\left\{ #1 \right\}} 			
\newcommand{\scalar}[2]{\left\langle #1, #2 \right\rangle}              
\newcommand{\ave}[1]{\left\langle #1 \right\rangle}			

\newcommand{\upi}[1]{\int \mathcal{D} #1 \,}				
\newcommand{\bpi}[3]{\int\limits_{#2}^{#3} \mathcal{D} #1 \,} 		
\newcommand{\usi}[1]{\int \mathrm{d} #1 \,}				
\newcommand{\bsi}[3]{\int\limits_{#2}^{#3} \mathrm{d} #1 \,}		
\newcommand{\umi}[2]{\int \mathrm{d}^{#1} #2 \,}			
\newcommand{\fmi}[2]{\int \frac{\mathrm{d}^{#1} #2}{(2\pi)^{#1}} \,}    

\newcommand{\pder}[2]{\frac{\partial #1}{\ii \partial #2}}		
\newcommand{\fder}[2]{\frac{\delta{#1}}{\ii \delta{#2}}}		
\newcommand{\mpd}{\bar{\rho}}

\newtheorem{theorem}{Theorem}
\newtheorem{corollary}[theorem]{Corollary}

\begin{document}

\title{Non-equilibrium statistical field theory for classical particles: Initially correlated grand canonical ensembles}
\author{Felix Fabis, Daniel Berg, Elena Kozlikin, Matthias Bartelmann}
\affiliation{Heidelberg University, Zentrum f\"ur Astronomie, Institut f\"ur Theoretische Astrophysik, Philosophenweg 12, 69120 Heidelberg, Germany}

\begin{abstract}
It was recently shown in \citet{Bartelmann2014} how correlated initial conditions can be introduced into the statistical field theory for classical particles pioneered by
\citet{Das2012}. In this paper we extend this development from the canonical to the grand canonical ensemble for a system satisfying statistical homogeneity
and isotropy. We do this by translating the probability distribution for the initial phase space coordinates of the particles into an easy diagrammatic representation and then
using a variant of the Mayer cluster expansion to sum over particle numbers. The grand canonical generating functional is then used in a structured approach
to the derivation of the non-interacting cumulants of the two core collective fields, the density $\rho$ and the response field $B$. As a side-product we find several
theorems pertaining to these cumulants which will be useful when investigating the interacting regime of the theory in future work. 
\end{abstract}

\maketitle

\section{Introduction}

This work is part of a series aiming at finding a new approach to the problem of non-equilibrium particle kinetics using a statistical field theory for classical particles.
While this work was begun with an application to cosmological structure formation in mind, we feel that our developments may be useful in other fields of statistical physics which
is why we will try to keep our discussions as general as possible and only make the connection with cosmology when we deem it necessary.

The statistical field theory which forms the basis for our work was developed by \citet{Mazenko2010,Mazenko2011} and \citet{Das2012,Das2013}. The basic premise of their work
was to use the path integral approach for classical mechanics (cf.~\citet{Martin1973}, \citet{Gozzi1989}, \citet{Penco2006}) in order to describe the microscopic degrees of freedom
of individual particles. Any macroscopic field is then collectively constructed from the microscopic information. This has several advantages over the standard approach of applying
the path integral formalism directly to an effective theory for the macroscopic fields which were already discussed in \citet{Bartelmann2014}. Two prominent examples are the relative
structural simplicity of the equations of motion and the fact that multi-streaming does not pose a problem because the macroscopic fields are only assembled from the microscopic
degrees of freedom at the time of interest by applying collective field operators.

While the work of \citeauthor{Das2012} on the theory was extensive they mostly concentrated on treating fluctuations around an equilibrium state with the help of
fluctuation-dissipation relations. However, the theory gives a lot of freedom in choosing the initial state of the system and is thus applicable to a very wide range of problems.
In \citet{Bartelmann2014} we thus explored how to develop the non-equilibrium statistics of a system with correlations between the initial positions and momenta of the particles.
This was done in the framework of a canonical ensemble with a fixed particle number. Two-particle interactions were implemented as a straightforward perturbation series in the 
interaction potential. We could then show in \citet{Bartelmann2014a}, \citep{Bartelmann2014b} that by describing the `free motion' partly with Zel'dovich trajectories, the non-linear
growth of the CDM power spectrum known from N-body simulations could be mimicked over a remarkable range of scales.

This work will extend the treatment of initially correlated systems with statistical homogeneity and isotropy to the grand canonical ensemble. We will see that this automatically
leads to a formulation of the theory in terms of the connected n-point functions or cumulants of the collective fields and allows for a structured approach to their calculation.
By contrast, in the canonical ensemble the n-point correlators seem to be the more natural quantities and a formulation in terms of cumulants must be obtained by hand which becomes
cumbersome for higher orders of perturbation theory.
In our next paper (\citep{Fabis2015a}) we will then use the findings of this work to extend the self-consistent perturbation theory developed by \citeauthor{Das2012} to include
initial correlations. It will allow us to obtain the linear growth of the CDM powerspectrum familiar from standard Eulerian perturbation theory directly from the theory itself
without resorting to using Zel'dovich trajectories, but rather using the actual free Hamiltonian trajectories and the unmodified Newtonian potential. This is possible due to the
fact that the grand canonical perturbation series sums up infinite classes of diagrams from the canonical one.

The outline of this paper is as follows. We begin in Sect.~2 by giving a clear definition of the grand canonical generating functional for a system satisfying
statistical homogeneity and isotropy. In Sect.~3 we implement the initial phase space probability distribution of \citet{Bartelmann2014} and express it in terms of a simple
diagrammatic language. A technique known as the Mayer cluster expansion is then used to factorize the generating functional such that the summation over particle numbers can be
performed exactly. In Sect.~4 we develop a systematic approach for managing the remaining combinatorics in deriving the non-interacting cumulants which is however reduced when
compared to the canonical approach. Along the way we derive some general theorems for these cumulants which will prove helpful for dealing with perturbation theory in
\citet{Fabis2015a}.

While many of the quantities we will encounter have already been descibed in \citep{Mazenko2010,Mazenko2011,Das2012,Das2013} we will try to stay as
close as possible to the notation introduced in \citep{Bartelmann2014} throughout this work. We refer the reader to the latter whenever we use some quantity without
giving an explicit definition.

\section{Grand canonical generating functional}

\subsection{Definition}

For the following discussion it will be helpful to think of a generating functional as the normalization factor of some probability density $\mathcal{P}$ for the $6N$ phase space
coordinates $\tens{x}$ of a collection of $N$ particles in some volume $V$. The basic concept behind the theory is to generalize the canonical ensemble from the equilibrium
Boltzmann distribution to any initial distribution $P_\ii$ and then to fix its evolution up to some arbitrary final time $t_\ff$ by requiring it to follow the classical trajectories
$\tens{x}^{\mathrm{cl}}(t;\tens{x}(t_\ii))$, i.e.~the solution to some equations of motion which is determined by choosing of an initial state $\tens{x}(t_\ii)$. This results in 
a new phase space probability density
\begin{equation}
 \mathcal{P}_{P_\ii,V,N}\left[\tens{x}(t_\ff),\tens{x}(t_\ii)\right] = P_{\ii}(\tens{x}(t_\ii)) \bpi{\tens{x}}{\ii}{\ff} \dirac\left[\tens{x}(t) - \tens{x}^{\mathrm{cl}}(t;\tens{x}(t_\ii))\right] \;.
\label{eq:02-1}
\end{equation}
The canonical generating functional is obtained by integrating over both the initial and final states. In this framework defining the grand canonical ensemble is
conceptually an easy step. We give up the notion of a fixed number of particles $N$ and replace it with an
arbitrary probability distribution for the number of particles $P_N$. Using conditional probabilities we may write
\begin{equation}
\mathcal{P}_{P_\ii,V,P_N}\left[ \tens{x}(t_\ff),\tens{x}(t_\ii),N \right] = \mathcal{P}_{P_\ii,V,P_N}\left[ \tens{x}(t_\ff), \tens{x}(t_\ii) | N \right] \, P_N(N) \;.
\label{eq:02-2}
\end{equation}
However, we know that the first factor must be the probability density in the canonical case where the particle number $N$ is fixed
\begin{equation}
\mathcal{P}_{P_\ii,V,P_N}\left[ \tens{x}(t_\ff), \tens{x}(t_\ii), N \right] = \mathcal{P}_{P_\ii,V,N}\left[\tens{x}(t_\ff),\tens{x}(t_\ii)\right] \, P_N(N) \;.
\label{eq:02-3}
\end{equation}
With this we may define the grand canonical partition functional simply as
\begin{equation}
Z_{\mathrm{gc}} = \sum_{N=0}^{\infty} \, \upi{\tens{x}(t)} P_\ii(\tens{x}(t_\ii)) \, P_N(N) \, \dirac\left[\tens{x}(t) - \tens{x}^{\mathrm{cl}}(t;\tens{x}(t_\ii))\right] \;,
\label{eq:02-4}
\end{equation}
where we absorbed the integration over initial and final states into the path integral.

\subsection{Particle number probability distribution}

We now need to specify the probability density $P_N$ for the number of particles. In order to do so we use the familiar textbook approach of embedding our system
$\mathcal{S}_{\mathrm{gc}}$ into a much larger canonical system $\mathcal{S}_\mathrm{c}$.
The grand canonical system $\mathcal{S}_{\mathrm{gc}}$ may exchange particles with its complement in $\mathcal{S}_{\mathrm{c}}$ and particles may interact across
the boundary enclosing $\mathcal{S}_{\mathrm{gc}}$.

\begin{figure}[htp]
 \centering
 \includegraphics[scale=0.5]{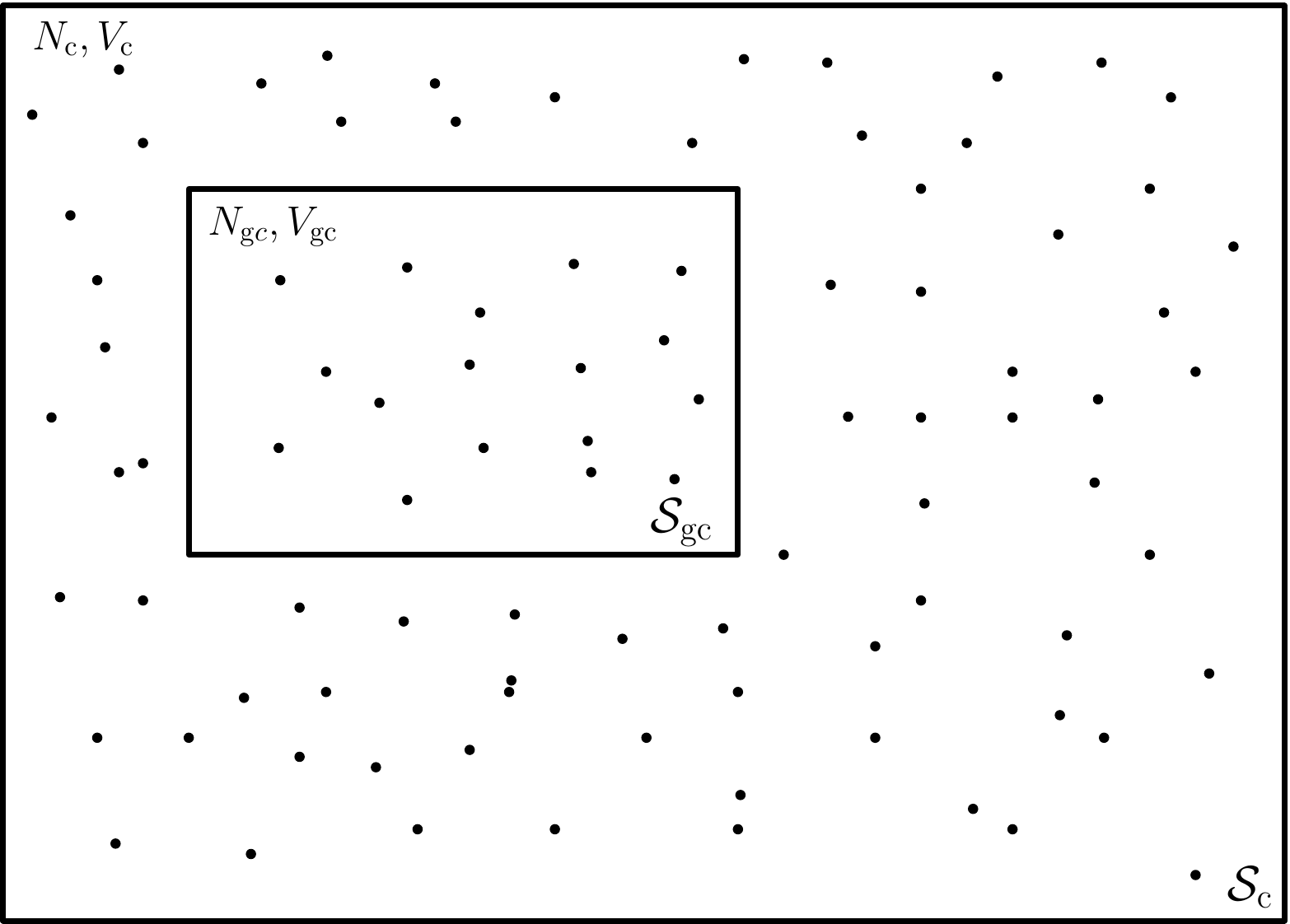}
 \caption{Embedding of the grand canonical system into a much bigger canonical system.}
\end{figure}

The standard approach for obtaining $P_N$ in equilibrium statistical physics would be to introduce a chemical potential defining the amount of energy needed for a single 
particle exchange between the two systems. Since we do not require equilibrium it would be in general quite hard to define this quantity since it
might in principle depend on the entire instanteneous phase space configuration $\tens{x}(t)$. Out of equilibrium we also lack a temperature defining an energy scale with
which we can compare the chemical potential.

We can circumvent this problem by restricting ourselves to the case where the canonical system $\mathcal{S}_\mathrm{c}$ is statistically homogeneous and isotropic, i.e.~its
statistical properties on all scales of interest are invariant under translation and rotation at all times. This then means that the probability for finding an individual particle
somewhere inside $\mathcal{S}_\mathrm{c}$ must be equal everywhere regardless of any kind of interactions or correlations. From this we can then immediately conclude that the
probability $p$ for finding a particle inside the subsystem $\mathcal{S}_\mathrm{gc}$ is given by
\begin{equation}
p = \frac{V_{\mathrm{gc}}}{V_\mathrm{c}} \;.
\label{eq:02-5}
\end{equation}
The probability for $N_{\mathrm{gc}}$ particles to be in $\mathcal{S}_{\mathrm{gc}}$ is thus given by a binomial distribution
\begin{equation}
N_{\mathrm{gc}} \sim \mathcal{B}\left(N_\mathrm{c},p=\frac{V_{\mathrm{gc}}}{V_\mathrm{c}}\right) \;.
\label{eq:02-6}
\end{equation}
We now take the `thermodynamic' limit of $N_\mathrm{c},V_\mathrm{c} \rightarrow \infty$ while keeping the mean particle density $\mpd = N_\mathrm{c}/V_\mathrm{c}$ and thus the
mean mass density constant. In this limit $p \rightarrow 0$ and we may approximate (\ref{eq:02-6}) by a Poisson distribution as
\begin{equation}
N_{\mathrm{gc}} \sim \mathcal{B}\left(N_\mathrm{c},p=\frac{V_{\mathrm{gc}}}{V_\mathrm{c}}\right) \; \rightarrow \; \mathcal{P}\left(N_\mathrm{c} p = \mpd \, V_{\mathrm{gc}} = \ave{N_{\mathrm{gc}}} \right) \;.
\label{eq:02-7}
\end{equation}
We now ignore the surrounding canonical system and drop the suffix `gc'. Pending normalization we may thus state
\begin{equation}
P_N(N) = \frac{\ave{N}^N}{N!} = \frac{\mpd^N \, V^N}{N!} \;.
\label{eq:02-8}
\end{equation}

\section{Initial correlations}

\subsection{Preliminaries}

We now insert \eqref{eq:02-8} into \eqref{eq:02-4} and perform all the steps described in \citet{Bartelmann2014}. We rewrite the functional Dirac delta distribution as a
Fourier transform and separate the theory into a free and an interacting part. We then introduce collective fields and their respective source fields $H$, which allows us to express the
interaction by an operator $\hat{\mathrm{S}}_\mathrm{I}$, which is independent of particle number. Then we also express the collective fields by an operator $\hat{\Phi}$, which
enables us to execute the path integrals using the solution to the free equations of motion. The complete generating functional then reads
\begin{widetext}
 \begin{align}
 Z_{\mathrm{gc}}[H,\tens{J},\tens{K}] = \e^{\ii \, \mathrm{\hat{S}}_\mathrm{I}} \, \sum_{N=0}^{\infty} \, \e^{\ii \, H \, \cdot \, \hat{\Phi}} \usi{\tens{q}^{(\ii)}} \usi{\tens{p}^{(\ii)}} \,& 
                                        \frac{\mpd^N \, V^N}{N!} \frac{V^{-N}}{\sqrt{(2\pi)^{3N}\det C_{pp}}} \, \left(\hat{\mathcal{C}} \left( \pder{}{\tens{p}^{(\ii)}} \right)
					\nexp{-\frac{1}{2} \tens{p}^{(\ii)\top} \, C_{pp}^{-1} \, \tens{p}^{(\ii)}} \right) \notag \\
				       &\e^{\ii \, \scalar{\bar{\tens{J}}_q}{\tens{q}^{(\ii)}} } \, \e^{\ii \, \scalar{\bar{\tens{J}}_p}{\tens{p}^{(\ii)}} } \, \e^{-\ii \, S_K^{(N)}[\tens{J},\tens{K}]} \;.
\label{eq:03-1}
\end{align}
\end{widetext}
Since the interaction operator can be pulled out in front of the entire expression we only need to concern ourselves with the free grand canonical generating functional for the
remainder of this work. We directly see that the power of volume the $V$ cancel nicely. The quantity $\hat{\mathcal{C}}$ is a
\emph{polynomial} operator containing density-density correlations and density-momentum correlations. Its explicit form will be given later in \eqref{eq:03-15}.
Throughout the paper we will make implicit use of the fact that both these types of correlations as well as $C_{pp}$ only depend on the initial positions of particles, but not on
their momenta, which is due to the initial momenta being fixed to the value of an initial random momentum field. The time averaged sources are defined as 
\begin{equation}
 \bar{\tens{J}}_{q,p} = \bsi{t}{t_\ii}{t_\ff} \tens{J}(t)^\top \mathcal{G}(t,t_\ii) \, \mathcal{P}_{q,p} 
\label{eq:03-2}
\end{equation}
with the projection operators
\begin{equation}
 \mathcal{P}_q = \matrix{cc}{\mathcal{I}_3 \\ 0_3 } \otimes \mathcal{I}_N \;, \quad \mathcal{P}_q = \matrix{cc}{0_3 \\ \mathcal{I}_3} \otimes \mathcal{I}_N \;,
\label{eq:03-3}
\end{equation}
which take care of selecting either the position or momentum part from the right hand side of the free $N$-particle propagator $\mathcal{G} = G \otimes \mathcal{I}_N$, where
$G$ is the single particle free propagator, i.e.~the Green's function of the free equations of motion of a single particle. The purely source dependent action term $S_K^{(N)}$
is given by
\begin{equation}
 S_K^{(N)}[\tens{J},\tens{K}] = \bsi{t}{t_\ii}{t_\ff} \bsi{t'}{t_\ii}{t_\ff} \tens{J}(t) \, \mathcal{G}(t,t') \, \tens{K}(t) \;.
\label{eq:03-4}
\end{equation}
The angular brackets in \eqref{eq:03-1} are not averages but define the scalar product for the tensors bundling properties of all $N$ particles such that 
\begin{equation}
 \scalar{\tens{A}}{\tens{B}} = \sum_{j=1}^N \, \vec{A}_j \cdot \vec{B}_j \;.
\label{eq:03-5}
\end{equation}
We may now apply the appropriate number of partial integrations to change all partial derivatives w.r.t.~initial momenta present in the $\hat{\mathcal{C}}$ operator from acting
on the Gaussian exponential to act on the phase factor $\e^{\ii \, \scalar{\bar{\tens{J}}_p}{\tens{p}^{(i)}} }$ instead, picking up a minus sign for every partial integration.
The fact that $\hat{\mathcal{C}}$ is polynomial and $C_{pp}$ is positive-definite ensures that all boundary terms vanish. Then we execute all these derivatives to obtain
\begin{gather}
    \usi{\tens{p}^{(\ii)}} \left( \hat{\mathcal{C}} \left( \pder{}{\tens{p}^{(\ii)}} \right) \, \nexp{-\frac{1}{2} \tens{p}^{(\ii)\top} \, C_{pp}^{-1} \, \tens{p}^{(\ii)}} \right) \, \e^{\ii \, \scalar{\bar{\tens{J}}_p}{\tens{p}^{(\ii)}} } \notag \\
 =  \usi{\tens{p}^{(\ii)}} \nexp{-\frac{1}{2} \tens{p}^{(\ii)\top} \, C_{pp}^{-1} \, \tens{p}^{(\ii)}} \, \hat{\mathcal{C}} \left(-\bar{\tens{J}}_p \right) \e^{\ii \, \scalar{\bar{\tens{J}}_p}{\tens{p}^{(\ii)}} } \;.
\label{eq:03-6}
\end{gather}
The remaining integration over initial momenta is now a $3N$-dimensional Fourier transform from $\tens{p}^{(\ii)}$ to $\bar{\tens{J}}_p$ which gives
\begin{gather}
    \usi{\tens{p}^{(\ii)}} \frac{1}{\sqrt{(2\pi)^{3N}\det C_{pp}}} \nexp{-\frac{1}{2} \tens{p}^{(\ii)\top} \, C_{pp}^{-1} \, \tens{p}^{(\ii)}} \, \e^{\ii \, \scalar{\bar{\tens{J}}_p}{\tens{p}^{(\ii)}} } \notag \\
 =  \,\nexp{-\frac{1}{2} \bar{\tens{J}}_p^\top \, C_{pp} \, \bar{\tens{J}} } \;.
\label{eq:03-7}
\end{gather}
In this form we can easily split up $C_{pp}$ into its diagonal part containing the auto-correlations of the momenta of individual particles and the remaining trace-free part which
contains only cross-correlations between momenta of \emph{different} particles and has entries of $3\times3$ dimensional zero matrices on the diagonal.
The statistical homogeneity of our system dictates that all entries on the diagonal must be independent of particle position and thus spatially constant. We may thus split $C_{pp}$
as
\begin{equation}
 C_{pp} = \left( \sigma_p^2 \, \mathcal{I}_3 \right) \otimes \mathcal{I}_N + \sum_{j \neq k} \ave{\vec{p}_{j}^{\,(\ii)} \otimes \vec{p}_{k}^{\,(\ii)}} \otimes E_{jk} \;,
\label{eq:03-8}
\end{equation}
where $E_{jk} = \vec{e}_j \otimes \vec{e}_k$. When we later specialize to a curl-free initial velocity field we find  $\sigma_p^2 = \alpha \sigma_1^2/3$ where $\alpha$
is the constant conversion factor between velocity and momentum. For later use we define the second term as the cross-correlation matrix 
\begin{equation}
 C_{pp}^* \coloneqq \sum_{j \neq k} \ave{\vec{p}_{j}^{\,(\ii)} \otimes \vec{p}_{k}^{\,(\ii)}} \otimes E_{jk} = \sum_{j \neq k} C_{p_j p_k} \otimes E_{jk} \;.
\label{eq:03-9}
\end{equation}
Observe that in the notation of \citet{Bartelmann2014} $C_{p_j p_k} = \left(B_{pp}\right)_{jk}$. The diagonal part can be used to reintroduce
the integral over initial momenta by reversing the Fourier transform. This leads to
\begin{gather}
   \nexp{-\frac{1}{2} \, \bar{\tens{J}}_p^\top \, (\sigma_p^2 \, \mathcal{I}_{3N}) \, \bar{\tens{J}}_p} = \usi{\tens{p}^{(\ii)}} \frac{\nexp{-\frac{\scalar{\tens{p}^{(\ii)}}{\tens{p}^{(\ii)}}}{2 \, \sigma_p^2}}}{\sqrt{(2\pi\sigma_p^2)^{3N}}} \, \e^{\ii \, \scalar{\bar{\tens{J}}_p}{\tens{p}^{(\ii)}} } \notag \\
 = \usi{\tens{p}^{(\ii)}} P^{\mathrm{MB}}_{\sigma_p}\left(\tens{p}^{(\ii)}\right) \, \e^{\ii \, \scalar{\bar{\tens{J}}_p}{\tens{p}^{(\ii)}} } \;,
\label{eq:03-10}
\end{gather}
where the $P^{\mathrm{MB}}_{\sigma_p}$ is now the Maxwell-Boltzmann distribution with momentum dispersion $\sigma_p$. This is of course due to the fact that we have chosen a
Gaussian random velocity field in the first place and just separated off the cross-correlations. While reintroducing an already performed integration may seem as a step backwards
it will have the very desirable effect that we again have the complete free solution of the path integrals in the generating functional such that functional
derivatives w.r.t.~the sources $\tens{J},\tens{K}$ can be replaced by the phase-space quantities $\tens{x},\tensgr{\chi}$ conjugate to these sources. This will be important
for performing the Mayer cluster expansion. Finally, we note that we may replace the time averaged
source $\bar{\tens{J}}_p$ that appear in both $\hat{\mathcal{C}}$ and the momentum cross-correlation Gaussian as a functional derivative by using the relation
\begin{equation}
   - \bar{\tens{J}}_p  \, \e^{-\ii \, S_K^{(N)}[\tens{J},\tens{K}]} = \fder{}{\tens{K}_p(t_\mathrm{i})} \, \e^{-\ii \, S_K[\tens{J},\tens{K}]} \;.
\label{eq:03-11}
\end{equation}
After these manipulations the free grand canonical generating functional has the form shown in \eqref{eq:03-12}. Interestingly this expression for the initially
correlated set of particles shows that we can obtain its generating functional by first finding the functional of an ideal gas where particles are initially uncorrelated
in configuration space and have a Maxwell-Boltzmann distribution in momentum space. The initial correlations can then be induced by applying suitable operators.
\begin{widetext}
 \begin{align}
 Z_\mathrm{gc,0}[H,\tens{J},\tens{K}] = \sum_{N=0}^\infty \, \e^{\ii \, H \cdot \hat{\Phi}} \, \hat{\mathcal{C}}\left(\fder{}{\tens{K}_p(t_\mathrm{i})}\right) &\, \nexp{-\frac{1}{2} \left( \fder{}{\tens{K}_p(t_\mathrm{i})} \right)^\top C_{pp}^* \, \left(\fder{}{\tens{K}_p(t_\mathrm{i})} \right) } \notag \\
					& \frac{\mpd^N}{N!} \usi{\tens{q}^{(\ii)}} \usi{\tens{p}^{(\ii)}} P_{\sigma_p}^{\mathrm{MB}}\left(\tens{p}^{(\ii)}\right) \e^{\ii \, \scalar{\bar{\tens{J}}_q}{\tens{q}^{(\ii)}} } \, \e^{\ii \, \scalar{\bar{\tens{J}}_p}{\tens{p}^{(\ii)}} } \, \e^{-\ii \, S_K^{(N)}[\tens{J},\tens{K}]}
\label{eq:03-12}
\end{align}
\end{widetext}

\subsection{Diagrammatic representation of the initial correlations}

In the above form \eqref{eq:03-12} the sum over particle numbers can not be performed straightforwardly  because the initial correlations prevent the factorization
of the free generating functional into single particle contributions. We thus reorganise the partition sum, representing the correlations in diagrammatic form.
We begin with the momentum cross-correlations and rewrite the Gaussian factor in \eqref{eq:03-12} in the following way, which may be recognized as
the first step of a Mayer cluster expansion:
\begin{align}
    & \nexp{-\frac{1}{2} \left( \fder{}{\tens{K}_p(t_\mathrm{i})} \right)^\top \, C_{pp}^* \, \left( \fder{}{\tens{K}_p(t_\mathrm{i})} \right) } \notag \\
   =& \, \nexp{-\frac{1}{2} \sum_{j=1}^N \, \sum_{k=1, k \neq j}^N \left( \fder{}{\vec{K}_{p_j}(t_\mathrm{i})} \right)^\top \, C_{p_j p_k} \, \left( \fder{}{\vec{K}_{p_k}(t_\mathrm{i})} \right) } \notag \\
   =& \, \prod_{\{j,k\}} \left( 1 + \left[ \nexp{- \left( \fder{}{\vec{K}_{p_j}(t_\mathrm{i})} \right)^\top \, C_{p_j p_k} \, \left( \fder{}{\vec{K}_{p_k}(t_\mathrm{i})} \right) } - 1\right] \right) \notag \\
   =:& \, \prod_{\{j,k\}} \left( 1 + \hat{C}_{p_j p_k} \right) \;.
\label{eq:03-13}
\end{align}
In the second line we expanded the quadratic form explicitly into a sum over all the different particles using the definition of $C_{pp}^*$ from \eqref{eq:03-9}. In the third line,
we used the symmetry $C_{p_j p_k} = C_{p_k p_j}$ to express the double sum as a sum over all \emph{different pairs} $\{j,k\}$. Each pair only appears once in the sum,
e.g. $\{1,2\}$ and $\{2,1\}$ are considered equivalent and only one of them is summed over. In the fourth line, we defined a new scalar operator $\hat{C}_{p_j p_k}$ describing
the momentum correlation between two particles. In Fig.~\ref{fig:03-1}, we now represent this new operator as a dashed line connecting two dots which represent particles $i$ and $j$.
\begin{figure}[htp]
 \centering
 \includegraphics[scale=0.7]{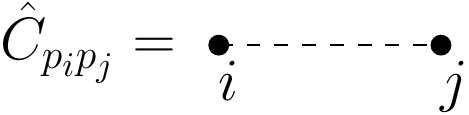}
 \caption{The $\hat{C}_{p_i p_j}$ line representing correlation between the initial momenta of two particles.}
\label{fig:03-1}
\end{figure}
We write out the product \eqref{eq:03-13} in terms of sums over \emph{different} $n$-tupels of different particle pairs as
\begin{equation}
 \prod_{\{i,j\}} \left( 1 + \hat{C}_{p_i p_j} \right) = 1 + \sum_{\{i,j\}} \hat{C}_{p_i p_j} + \sum_{\{\{i,j\},\{k,l\}\}^\star} \hat{C}_{p_i p_j} \hat{C}_{p_k p_l} + \ldots
\label{eq:03-14}
\end{equation}
The tupels are different in the same sense as the particle pairs, for example $\{\{1,2\},\{3,4\}\}$ and $\{\{3,4\},\{2,1\}\}$ are considered equivalent, while $\{\{1,3\},\{2,4\}\}$
is different. The star indicates that the pairs $\{i,j\}$ and $\{k,l\}$ may not be the same. This means that $\{\{1,2\},\{2,1\}\}$ is excluded, while $\{\{1,2\},\{2,3\}\}$
is included. This scheme extends to all $n$-tupels, leading to the only restriction on the topology of diagrams:
\begin{description}
 \item[Rule 1] Any pair of particles $i$ and $j$ may only be connected by at most one $\hat{C}_{p_i p_j}$-line.
\end{description}
We can now easily express the above sums in a diagrammatic form by going through all particle numbers $1 < n < N$ and drawing for every $n$ all diagrams compatible with
the above rule.
\begin{figure}[htp]
 \centering
 \includegraphics[scale=0.5]{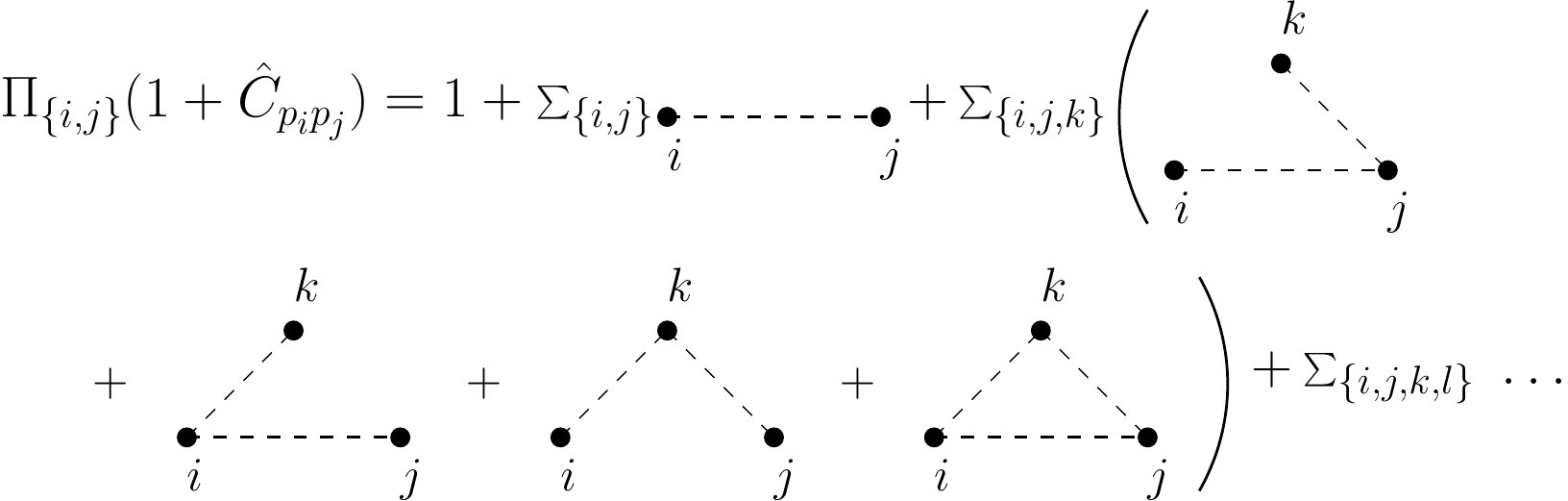}
 \caption{The momentum correlation of all particles ordered by number of correlated particles.}
\label{fig:03-2}
\end{figure}
As above, $\{i,j,k\}$ indicates that all distinct triplets are summed over, e.g.~$\{2,3,1\}$ is equivalent to $\{1,2,3\}$ and not counted extra. The same holds for all other
permutations of $\{1,2,3\}$. Because of this we had to draw three diagrams in Fig. \ref{fig:03-2} that are topologically identical which appears to be cumbersome.
However, this has the very desirable effect of making the above expression invariant under particle exchange. This property will become important later on.

The correlation operator $\hat{\mathcal{C}}$ was derived in the appendix of \citet{Bartelmann2014}. We first need to expand it explicitly with all the contributions from
the individual particles appearing.
\begin{align}
    \hat{\mathcal{C}}
 &= \prod_{l=1}^N \left( 1 - \ii \sum_{\substack{ n=1 \\ n \neq l}} \vec{C}_{\delta_l p_n} \cdot \fder{}{\vec{K}_{p_n}(t_\mathrm{i})} \right) \notag \\
 &+ \sum_{\{i,j\}} C_{\delta_i \delta_j} \, \prod_{\{l\}'} \left(1 - \ii \sum_{\substack{ n=1 \\ n \neq l}} \vec{C}_{\delta_l p_n} \cdot \fder{}{\vec{K}_{p_n}(t_\mathrm{i})} \right) \notag \\
 &+ \sum_{\{\{i,j\},\{k,l\}\}'} C_{\delta_i \delta_j} \, C_{\delta_k \delta_l} \, \prod_{\{m\}'} \left(1 - \ii \sum_{\substack{ n=1 \\ n \neq m}} \vec{C}_{\delta_m p_n} \cdot \fder{}{\vec{K}_{p_n}(t_\mathrm{i})} \right) \notag \\
 &+ \sum_{\{\{i,j\},\{k,l\},\{m,n\}\}'} \ldots
\label{eq:03-15}
\end{align}
Keep in mind that while $C_{p_i p_j}$ is a matrix, $C_{\delta_i \delta_j}$ is a scalar quantity. The restriction $n \neq l$ on the inner sums follows from
$\vec{C}_{\delta_l p_l} = 0$, which is a consequence of the statistical homogeneity and isotropy of the system. Primes on the sums mean that any particle index may appear only
in one pair of the $n$-tupel of pairs. For example, $\{\{1,2\},\{2,3\}\}$ is forbidden. Primes on the product index mean that all particle indices present in the
term of the preceding sum are excluded. Now define the new operator
\begin{equation}
 \hat{C}_{\delta_i p_j} = - \ii \, \vec{C}_{\delta_i p_j} \cdot \fder{}{\vec{K}_{p_j}(t_\mathrm{i})} \;.
\label{eq:03-16}
\end{equation}
For this operator and $C_{\delta_i \delta_j}$ we introduce line diagram representations as basic building blocks in Fig. \ref{fig:03-3}.
\begin{figure}[htp]
 \centering
 \includegraphics[scale=0.7]{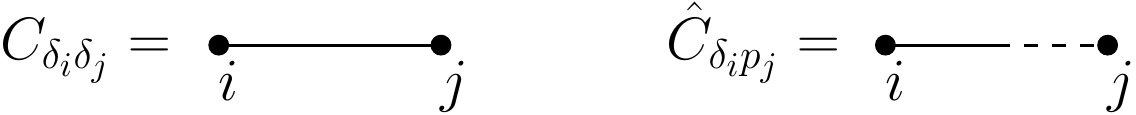}
 \caption{Representations of initial density and density-momentum cross correlation.}
 \label{fig:03-3}
\end{figure}
We again expand the products into multiple sums and then reorder the entire expression in terms of the number of particles being correlated.
This means that we can express $\hat{\mathcal{C}}$ by going through all particle numbers $\ell$ up to $N$ and drawing for each $\ell$ all diagrams that conform with a set of rules
that we read off \eqref{eq:03-15} and then summing these diagrams over all $\ell$-tupels of particles.
\begin{description}
 \item[Rule 2] Any particle index appears only once in the terms of the sums $\sum_{\{\{i_1,j_1\},\ldots,\{i_n,j_n\}\}'}$ over $\delta$-indices in \eqref{eq:03-15}.
               Thus, no $C_{\delta_i \delta_j}$ may be connected to one another. 
 \item[Rule 3] Since each product index in \eqref{eq:03-15} may appear only once in every term, no $\hat{C}_{\delta_l p_n}$ may be connected to another with the solid $\delta$ end.
 \item[Rule 4] The products in \eqref{eq:03-15} have no common index with the preceding sums due to the $\{l\}'$ restriction. Thus, no $\hat{C}_{\delta_l p_n}$ may be connected
               with its solid $\delta_l$ end to a $C_{\delta_l \delta_j}$.
\end{description}
We provide an example for each of the forbidden topologies in Fig.~\ref{fig:03-}. They can summarily be expressed as: `No solid $\delta$ lines may meet at the same particle'.
\begin{figure}[htp]
 \centering
 \includegraphics[scale=0.45]{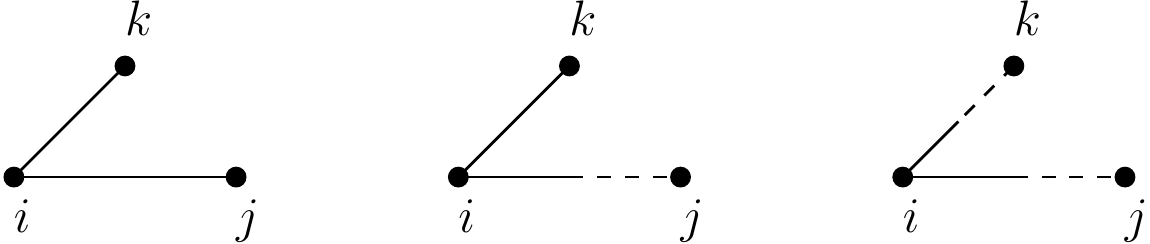}
 \caption{All three of the forbidden topologies according to rules 2-4.}
\label{fig:03-}
\end{figure}
\begin{figure}[htp]
 \centering
 \includegraphics[scale=0.45]{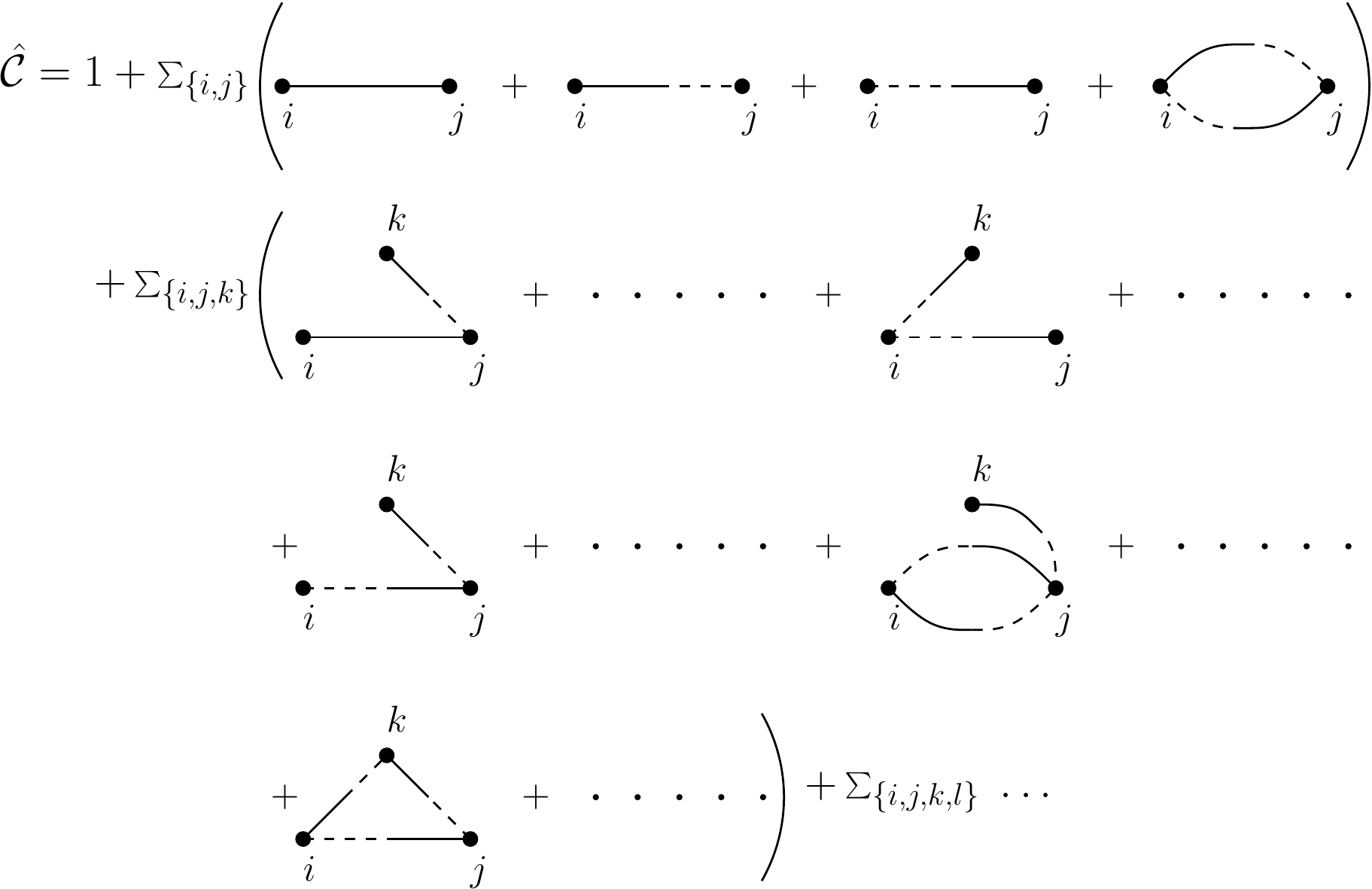}
 \caption{Diagrammatic expansion of (\ref{eq:03-15}) ordered by the number of correlated particles.}
\label{fig:03-4}
\end{figure}
Fig.~\ref{fig:03-4} shows the resulting diagrammatic expansion of the correlation operator $\hat{\mathcal{C}}$. The dots behind the diagrams correlating three particles
stand for all other diagrams which are topologically equivalent to the preceding one. According to \eqref{eq:03-12} we now need the product of both expansions shown in Figures
\ref{fig:03-2} and \ref{fig:03-4}. The principal form of our new total correlation operator is thus
\begin{equation}
 \hat{\mathcal{C}}_{\mathrm{tot}} = (1 + \Sigma_1) \, (1 + \Sigma_2) = 1 + \Sigma_1 + \Sigma_2 + \Sigma_1 \, \Sigma_2 \;,
\label{eq:03-17}
\end{equation}
which means that we retain all diagrams of both original expansions and must add all possible diagrams that can be drawn by using all three line types adhering to their individual rules,
resulting in the expansion shown in Fig. \ref{fig:03-5}. We have only drawn diagrams with two particles here which must of course be connected.
However, as soon as we arrive at four particles disconnected diagrams appear, the simplest example being two solid lines representing
for example $C_{\delta_1 \delta_2} C_{\delta_3 \delta_4}$.
\begin{figure}[htp]
 \centering
 \includegraphics[scale=0.45]{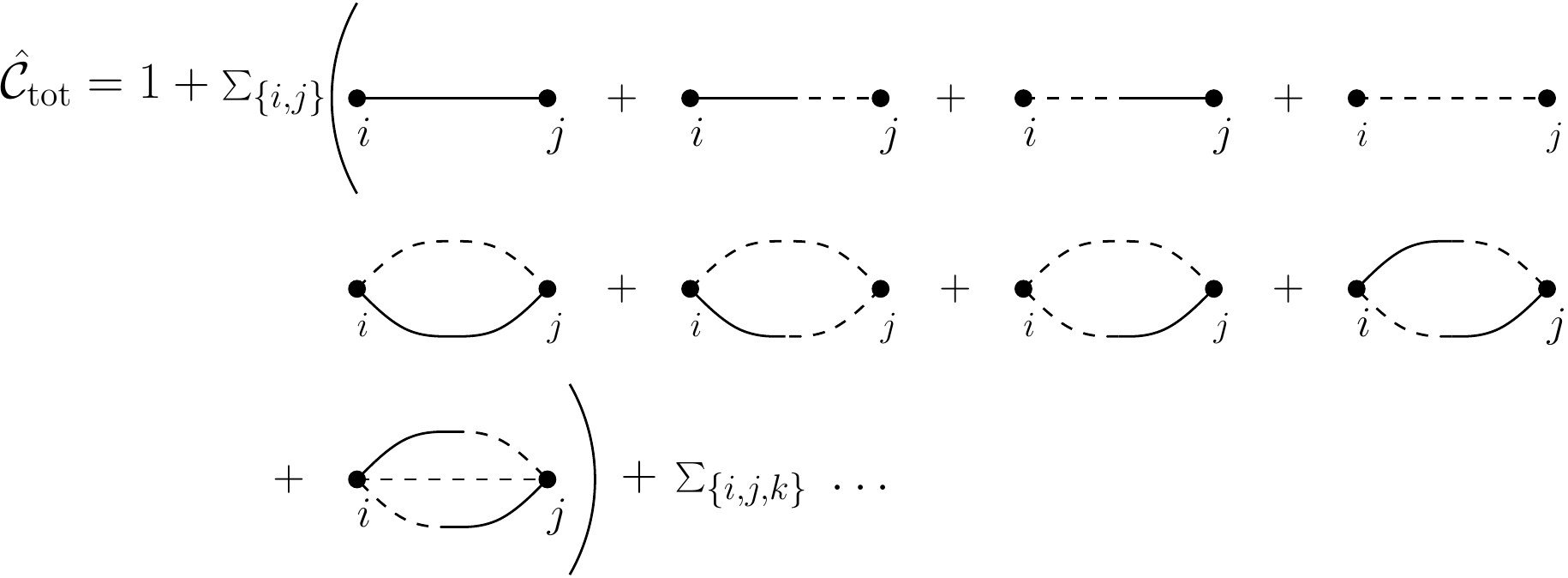}
 \caption{Diagrammatic expansion of the total correlation operator up to two particles.}
 \label{fig:03-5}
\end{figure}

\subsection{Mayer cluster expansion}

The Mayer cluster expansion is a technique first introduced in \citet{Mayer1941}. Since we already have introduced a diagrammatic form for our generating functional we will use
a suitable formulation found in \citet{becker}, adopting most of its notation and reasoning. The cluster expansion is normally used when one wants to compute the partition sum of
an interacting gas in equilibrium statistical physics. The configuration integral cannot be performed exactly due to the interactions. The Mayer cluster expansion is then employed
to expand the configuration integral into a perturbative series ordered by the number of particles taking part in the interaction allowing to execute the spatial integral
at least numerically. A common application is then to derive expressions for the coefficients in the virial expansion of the equation of state, which corrects the non-interacting
equation of state by terms of higher than linear order in the mean density.

As mentioned before, in our case the interaction has been separated into an operator and can be ignored for now. The Mayer cluster expansion is however more general in nature and
may be applied to our problem of managing the initial correlations. As a first step we remind ourselves of the following result derived Eq.~(35) of \citet{Bartelmann2014} for
the free $N$-particle generating functional, where we choose to leave out the averaging over the initial phase-space state $\tens{x}^{(\ii)}$:
\begin{align}
 Z_0^{(N)}[\tens{J},\tens{K}] &:= \bpi{\tens{x}(t)}{\mathrm{i}}{} \upi{\tensgr{\chi}(t)} \e^{\ii \, \mathrm{S}_0 + \ii \bsi{t}{t_\mathrm{i}}{t_\mathrm{f}} \left[ \scalar{\tens{J}(t)}{\tens{x}(t)} + \scalar{\tens{K}(t)}{\tensgr{\chi}(t)} \right] } \notag \\
                              &= \e^{\ii \, \scalar{\bar{\tens{J}}_q}{\tens{q}^{(\ii)}} } \, \e^{\ii \, \scalar{\bar{\tens{J}}_p}{\tens{p}^{(\ii)}} } \, \e^{-\ii \, S_K^{(N)}[\tens{J},\tens{K}]} \;,
\label{eq:03-18}
\end{align}
where $\mathrm{S}_0$ is the action containing only the free equations of motion. 
Inserting the first line into \eqref{eq:03-12}, we may pull both the collective field operator $\hat{\Phi}$ and our total correlation operator
$\hat{\mathcal{C}}_{\mathrm{tot}}$ back under the path integrals and execute the functional derivatives contained in them. This replaces the collective field operator $\hat{\Phi}$ 
with the actual collective field $\Phi$ and the argument of $\hat{\mathcal{C}}_{\mathrm{tot}}$ with
\begin{equation}
 \fder{}{\tens{K}_{p}(t_\ii)} \rightarrow \tensgr{\chi}_p(t_\ii) \;.
\label{eq:03-19}
\end{equation}
For now, we set the source terms $\tens{J},\tens{K}$ to zero because they are irrelevant for the following steps. They will be reintroduced later.
With the exception of $\hat{\mathcal{C}}_{\mathrm{tot}}$ all other parts of the generating functional can now be factorized into contributions from single particles.
For the differentials in the integral over the initial state and in the path integrals this is trivial and for the Maxwell-Boltzmann distribution we have
\begin{equation}
 P^{\mathrm{MB}}_{\sigma_p}(\tens{p}^{(\ii)}) = \prod_{j=1}^N \left( \frac{\e^{-\frac{(p_j^{(\ii)})^2}{2 \sigma_p^2}} }{\sqrt{(2\pi \, \sigma_p^2)^3}} \right) = \prod_{j=1}^N P^{\mathrm{MB}}_{\sigma_p} (\vec{p}_j^{\,(\ii)}) \;.
\label{eq:03-20}
\end{equation}
By definition (cf. \citet{Bartelmann2014}) the collective field $\Phi$ can be separated into one particle contributions. This must also hold for the free action,
\begin{equation}
 \mathrm{S}_0 = \sum_{j=1}^N \, \mathrm{S}_{0,j} \;, \quad  \Phi(1) = \sum_{j=1}^N \phi_j(1) \;.
\label{eq:03-21}
\end{equation}
We can now gather all one-particle contributions into a trace operator defined as
\begin{align}
 \tr_j := \mpd \umi{3}{q_j^{(\ii)}} \umi{3}{p_j^{(\ii)}} P^{\mathrm{MB}}_{\sigma_p}(\vec{p}_j^{\,(\ii)}) \notag \\
          \bpi{\vec{x}_j}{\ii}{} \upi{\vec{\chi}_j} \e^{\ii \, \left( \mathrm{S}_{0,j} + H \cdot \phi_j\right)} 
\label{eq:03-22}
\end{align}
and subsequently use this to write the non-interacting grand canonical generating functional in the very compact form
\begin{equation}
 Z_{\mathrm{gc},0}[H] = \sum_{N=0}^{\infty} \, \frac{1}{N!} \, \left( \prod_{j=1}^N \tr_j \right) \, \mathcal{C}^{(N)}_{\mathrm{tot}}\left(\tensgr{\chi}_p(t_\mathrm{i})\right) \;.
\label{eq:03-23}
\end{equation}

Now consider a set of $N$ particles. We want to separate the particles into a collection $\{c_\ell\}$ of subsets containing $\ell$ particles each. One such subset $c_\ell$ is
called a `cluster of size $\ell$'. Each such collection $\{c_\ell\}$ also has a set of numbers $\{m_\ell\}$ with $m_\ell$
being the number of clusters of size $\ell$ in $\{c_\ell\}$. A set $\{m_\ell\}$ is called a `cluster configuration', where we only care about how many clusters of size $\ell$ there
are and not about which actual particles are in which clusters. Clearly, every cluster configuration $\{m_\ell\}$ has different `realisations' $\{c_\ell\}$.
Any configuration must of course obey the constraint
\begin{equation}
 \sum_{\ell=0}^N \, m_\ell \cdot \ell = N \;.
\label{eq:03-24}
\end{equation}
In our diagrammatic language such a configuration corresponds to a \emph{clustering pattern} of particle dots. This means that we group dots into clusters by drawing the dots
with a clear separation between clusters. A realisation $\{c_\ell\}$ of such a configuration is then given by specifying which actual particle is assigned to which dot.
As an example we pick a set of $N=6$ particles and the realisation $(1,2,3)(4,5)(6)$ of the cluster configuration
$\{m_1 = 1, m_2 = 1, m_3 = 1, m_4 = m_5 = m_6 = 0\}$. This realisation corresponds to the \emph{fixed} clustering pattern shown in Fig.~\ref{fig:03-6}.
The notion of `fixed' means, that one needs to map the particle indices of the sequence $(1,2,3)(4,5)(6)$ to the dots by some bijective mapping. Once this is done,
the pattern may \emph{not} be changed in any way since this would correspond to another clustering realisation like e.g.~$(2,1,3)(4,5)(6)$, where
particles $1$ and $2$ have switched places. 
\begin{figure}[htp]
 \centering 
 \includegraphics[scale=0.5]{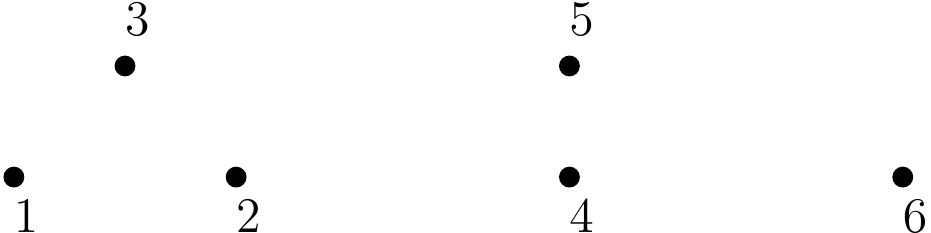}
 \caption{A fixed clustering pattern representing the $(1,2,3)(4,5)(6)$ clustering realisation.}
 \label{fig:03-6}
\end{figure}
The actual clustering is now achieved by drawing the lines of Figs.~\ref{fig:03-1},\ref{fig:03-3} between the particles of the individual clusters such that each cluster
gives a connected diagram.
Two possible ways of building $(1,2,3)(4,5)(6)$ can be seen in Figs.~\ref{fig:03-7}, \ref{fig:03-8}, which both belong to a clustering realisation of
$N=6$ particles, i.e.~all particles are actually represented by dots. If we look back at the diagrams of $\hat{\mathcal{C}}_{\mathrm{tot}}$ in Fig.~\ref{fig:03-5},
we see that all of these can in the same way be understood as belonging to a clustering realisation of some $N$ particles but only the two, three and so on correlated particles 
are actually drawn. All other particles are uncorrelated and must thus be thought of as isolated dots giving factors of unity.
\begin{figure}[htp]
 \centering 
 \includegraphics[scale=0.5]{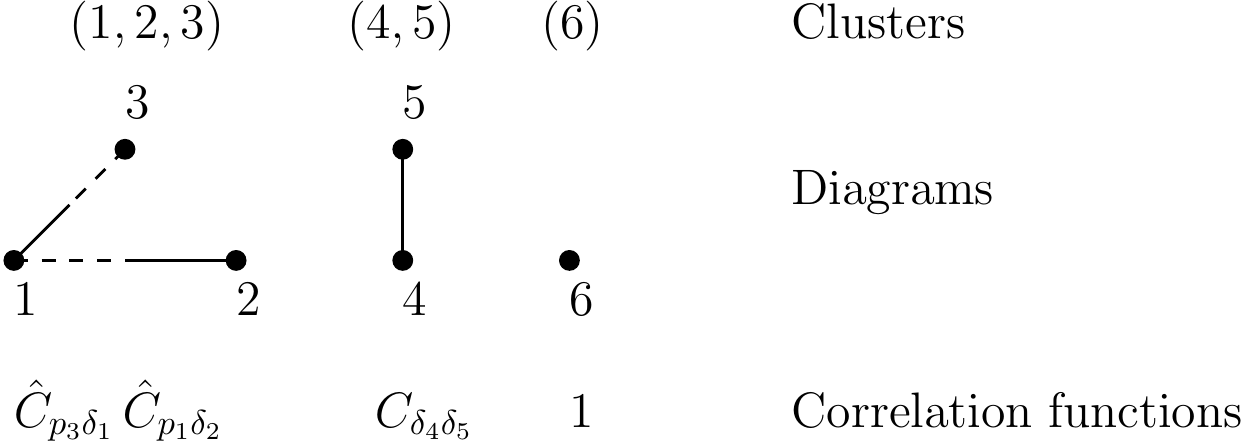}
 \caption{One term in $\mathcal{C}_{\mathrm{tot}}$ contributing to the $(1,2,3)(4,5)(6)$ clustering realisation.}
 \label{fig:03-7}
\end{figure}
\begin{figure}[htp]
 \centering 
 \includegraphics[scale=0.5]{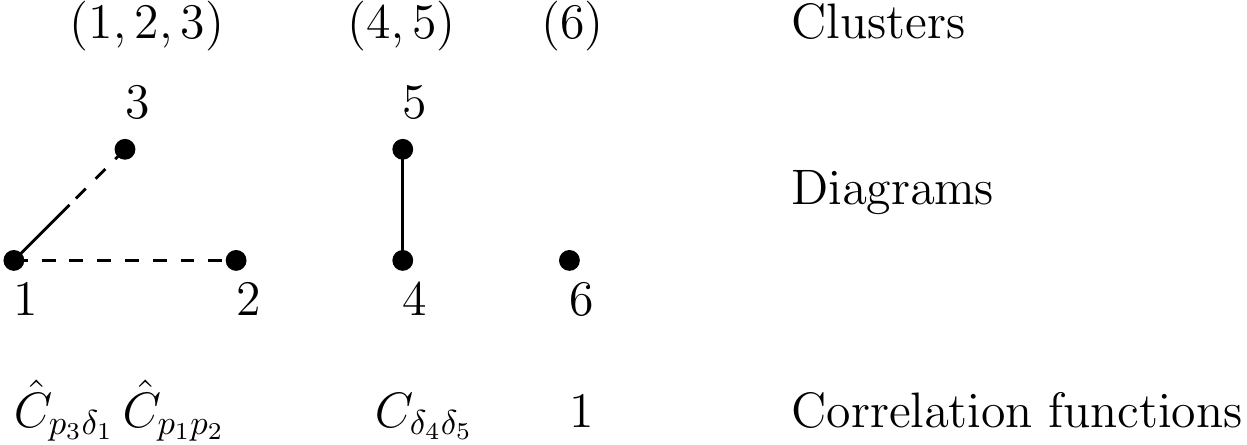}
 \caption{Another term in $\mathcal{C}_{\mathrm{tot}}$ contributing to the $(1,2,3)(4,5)(6)$ clustering realisation.}
 \label{fig:03-8}
\end{figure}
The important aspect is that any diagram in $\hat{\mathcal{C}}_{\mathrm{tot}}$ can be factorized into such clusters even after applying all trace operators.
For the example of Fig.~\ref{fig:03-7} we have
\begin{gather}
 \left(\prod_{j=1}^6 \tr_j\right) \hat{C}_{p_3 \delta_1} \hat{C}_{p_1 \delta_2} C_{\delta_4 \delta_5} \\ 
 = \left(\tr_1 \tr_2 \tr_3 \, \hat{C}_{p_3 \delta_1} \hat{C}_{p_1 \delta_2}\right) \left(\tr_4 \tr_5 \, C_{\delta_4 \delta_5} \right) \left( \tr_6 \right) \;.
\label{eq:03-25}
\end{gather}
For any realisation of a given cluster configuration as a fixed pattern we may thus go through all clusters of the realisation individually while keeping the correlation functions in
the other clusters fixed, sum all connected diagrams compatible with the rules from the previous section for each cluster and then apply the trace operators $\mathrm{Tr}_j$. For the
example of Figs.~\ref{fig:03-7}, \ref{fig:03-8} we would start with the 3-cluster. While holding the $C_{\delta_4 \delta_5}$ in the 2-cluster fixed, we first sum the two
diagrams shown and then continue to sum all remaining connected 3-particle diagrams shown in Figs.~\ref{fig:03-2}, \ref{fig:03-4}. This sum is defined as a single factor
and then held constant while we sum the nine possible 2-particle diagrams of Fig.~\ref{fig:03-5} in the 2-cluster. The 1-particle dot is trivially connected. In general, if we define 
\begin{equation}
 \varSigma_\ell = \frac{1}{\ell!} \, \mathrm{Tr}_{j_1} \, \ldots \, \mathrm{Tr}_{j_\ell} \, \mathcal{C}^{(\ell)}_{\mathrm{con}}\left(\tensgr{\chi}_p(t_\mathrm{i})\right) \;,
\label{eq:03-26}
\end{equation}
where $\mathcal{C}^{(\ell)}_{\mathrm{con}}$ is the sum of all connected $\ell$-point diagrams, then the contribution from all diagrams in $\hat{\mathcal{C}}_{\mathrm{tot}}$
belonging to a specific realisation of an $N$-particle cluster configuration $\{m_\ell\}$ can be factorized as
\begin{equation}
 \prod_{\ell=0}^N \, \left(\ell! \, \varSigma_\ell\right)^{m_\ell} \;.
\label{eq:03-27}
\end{equation}
For example, since one cannot have a disconnected correlation of two particles, $\varSigma_2$ can be seen as the 2-particle sum in Fig.~\ref{fig:03-5}.
The $(1,2,3)(4,5)(6)$ realisation would give the contribution $(1 \cdot 1)^1 \, (2 \cdot \varSigma_2)^1 \, (6 \cdot \varSigma_3)^1$. 
However, we now have to think about about all other possible realisations of a given cluster configuration $\{m_\ell\}$. While they all give the same contribution
\eqref{eq:03-27} due to the trace operators and the summing of all connected diagrams, we still need to know how many of them are actually present in the expansion of
$\mathcal{C}^{(N)}_{\mathrm{tot}}$. In principle there are $N!=6!=720$ such realisations since this is just the question of arranging the particles in a certain order
once we have drawn the clustering pattern and fixed the bijective mapping. However, there are equivalent realisations which lead to overcounting.
We need to take two things into account:

\begin{itemize}
 \item Exchanging the order of particles inside a cluster leaves the sum over all connected diagrams of the cluster invariant. Each particle exchange would thus give a new term
       that has already been accounted for, so we need to divide out all possible orderings within clusters. For a given $N$-particle cluster configuration
       these are $\prod_{\ell=0}^N \left(\ell!\right)^{m_\ell}$.
       
 \item For two clusters of equal size exchanging all particles between them also leaves their respective sums over connected diagrams invariant. For an arbitrary number of equally
       sized clusters this amounts to reordering the sequence of these clusters in the clustering pattern and thus to $m_\ell!$ possibilities.  
\end{itemize}

Accounting for these restrictions, we may now rewrite the traces over $\mathcal{C}^{(N)}_{\mathrm{tot}}$ from \eqref{eq:03-23} as a sum over cluster configurations $\{m_\ell\}$
subject to the constraint \eqref{eq:03-24}. This allows writing the free grand canonical partition functional into
\begin{align}
 Z_{\mathrm{gc},0}[H] &= \sum_{N=0}^{\infty} \, \frac{1}{N!} \, \sum_{\{m_\ell\}'} \, \frac{N!}{\prod_{\ell=0}^N \, (\ell !)^{m_\ell} \, m_\ell!} \, \prod_{\ell=0}^N \, \left(\ell ! \varSigma_\ell\right)^{m_\ell} \notag \\ 
                      &= \sum_{N=0}^{\infty} \, \sum_{\{m_\ell\}'} \, \prod_{\ell=0}^N \, \frac{\left(\varSigma_\ell\right)^{m_\ell}}{m_\ell !} \displaybreak[0] \notag \\
                      &= \sum_{\{m_\ell\}} \, \prod_{\ell'} \, \frac{\left(\varSigma_\ell\right)^{m_\ell}}{m_\ell !} 
                       = \prod_{\ell=0}^{\infty} \, \sum_{m_\ell=0}^{\infty} \, \frac{\left(\varSigma_\ell\right)^{m_\ell}}{m_\ell !} \displaybreak[0] \notag \\
                      &= \prod_{\ell=0}^{\infty} \, \e^{\varSigma_\ell} = \nexp{\sum_{\ell=0}^{\infty} \, \varSigma_\ell} \;.
\label{eq:03-28}
\end{align}
Going from the second to the third line we use that first summing over all cluster configurations with the $N$-particle constraint $\{m_\ell\}'$ and then summing over all
particle numbers is identical to summing over all cluster configurations without this constraint. In the third line $\ell'$ runs over all cluster sizes present in
the preceding cluster configuration. We then reorder the sum not in terms of cluster configurations but in terms of cluster size.

Readers familiar with QFT will recognize the above equation \eqref{eq:03-28} as what is often called \emph{exponentiation of disconnected diagrams}. The reason that we obtain this
familiar structure for the generating functional is that the topological principles behind the Mayer cluster expansion and the Feynman diagrams of QFT are the same. Showing how
these principles can be implemented for the initial correlations between particles sampling a Gaussian random field, rather than for the particle interactions, should be seen as
one of the main achievements of this paper. We also mention that it should in principle be possible to include slightly non-Gaussian fields into this formalism, since higher
order cumulants of the random field would be the equivalent of vertices in our diagrammatic language.

Since it does not matter which actual $\ell$ particles are used in calculating $\varSigma_\ell$ we can always think
of some representative set of $\ell$-particles for which the trace operators in \eqref{eq:03-26} are defined. We include the sources $\tens{J},\tens{K}$
defined for this representative set by modifying the trace operators as
\begin{equation}
 \mathrm{Tr}_j \rightarrow \mathrm{Tr}_j \, \nexp{\ii \, \bsi{t}{t_\mathrm{i}}{t_\mathrm{f}} \left( \vec{J}_j(t) \cdot \vec{x}_j(t) + \vec{K}_j(t) \cdot \vec{\chi}_j(t) \right) } \;.
\label{eq:03-29}
\end{equation}
The complete grand canonical generating functional is now easily obtained by applying the interaction operator
\begin{equation}
 Z_{\mathrm{gc}}[H,\tens{J},\tens{K}] = \e^{\ii \, \hat{\mathrm{S}}_\mathrm{I} } \, \nexp{\sum_{\ell=0}^{\infty} \, \varSigma_\ell[H,\tens{J},\tens{K}]} \;.
\label{eq:03-30}
\end{equation}

\section{Non-interacting cumulants}

\subsection{General form}

Having found the general form of $Z_{\mathrm{gc},0}$, we will now present a scheme how the connected non-interacting correlators of the collective field $\Phi$ can be derived.
For the remainder of this paper we will call these $n$-point cumulants. Their general definition is given by
\begin{align}
 G_{\Phi_{\alpha_1} \ldots \Phi_{\alpha_n}}(1,\ldots,n) &= \fder{}{H_{\alpha_1}(1)} \, \ldots \, \fder{}{H_{\alpha_n}(n)} \notag \\
                                                                     & \quad \left. \ln Z_{\mathrm{gc}}[H,\tens{J},\tens{K}] \, \right\vert_{H=\tens{J}=\tens{K}=0} \;,
\label{eq:04-1}
\end{align}
where we use the notation $1 = (t_1,\vec{k}_1)$ to bundle time and Fourier space coordinates into an `external label'.
We further shorten the notation by understanding that $\Phi_{\alpha_1}$ means the collective field of type $\alpha$ evaluated at the external label $1$.
In the non-interacting case of $\hat{S}_{\mathrm{I}} = 0$ the exponential form of $Z_{\mathrm{gc},0}$ cancels against the logarithm. Due to the ordering by the
number $\ell$ of representative particles it will be advantageous to also separate $G^{(0)}_{\Phi_{\alpha_1} \, \ldots \, \Phi_{\alpha_n}}$ into its contributions
from $\ell$ particles as
\begin{align}
 G^{(0)}_{\Phi_{\alpha_1} \ldots \Phi_{\alpha_n}} &= \left. \fder{}{H_{\alpha_1}} \ldots \fder{}{H_{\alpha_n}} \left( \sum_{\ell=0}^\infty \, \varSigma_\ell[H,\tens{J},\tens{K}] \right) \, \right\vert_{H=\tens{J}=\tens{K}=0} \notag \\
                                                &= \left. \sum_{\ell=0}^\infty \, \usi{\Gamma_{\mathrm{i}}} \hat{\mathcal{C}}^{(\ell)}_{\mathrm{con}} \, \hat{\Phi}^{(\ell)}_{\alpha_1}  \ldots \hat{\Phi}^{(\ell)}_{\alpha_n} \, Z^{(\ell)}_0[\tens{J},\tens{K}] \, \right\vert_{\tens{J}=\tens{K}=0} \notag \\ 
                                                &= \sum_{\ell=0}^\infty G_{\Phi_{\alpha_1} \, \ldots \, \Phi_{\alpha_n}}^{(0,\ell)}
\label{eq:04-2}
\end{align}
Observe that all collective quantities are now intrinsically defined for the $\ell$ representative particles that appear in
$\varSigma_\ell$. In the second line we returned to expressing the collective fields as operators whose single-particle contributions in Fourier space read (cf. Eqs.~(54),(58) in
\citep{Bartelmann2014})
\begin{align}
 \hat{\phi}_j(\vec{k}_1,t_1) &= \cvector{ \hat{\phi}_{\rho_j}(1) \\ \hat{\phi}_{B_j}(1) } = \cvector{ \hat{\phi}_{\rho_j}(1) \\ \hat{b}_j(1) \, \hat{\phi}_{\rho_j}(1) }\notag \\
                             &= \cvector{ \nexp{-\ii \, \vec{k}_1 \cdot \fder{}{\vec{J}_{q_j}(t_1)} } \\ \left(\ii \, \vec{k}_1 \cdot \fder{}{\vec{K}_{p_j}(t_1)} \right) \nexp{-\ii \, \vec{k}_1 \cdot \fder{}{\vec{J}_{q_j}(t_1)} } }
\label{eq:04-3}
\end{align}
acting on $Z^{(\ell)}_0[\tens{J},\tens{K}]$ which is the deterministic canonical generating functional \eqref{eq:03-18}, only defined for the $\ell$ representative particles.
Furthermore, we again used (\ref{eq:03-19}) to express the factors of $\tensgr{\chi}_p$ in the initial correlations as functional derivatives leading to the operator
$\hat{\mathcal{C}}^{(\ell)}_{\mathrm{con}}$ which absorbed the factor $\frac{1}{\ell!}$ in (\ref{eq:03-26}). The integration over initial conditions is only over the ideal gas part
\begin{equation}
 \usi{\Gamma_{\mathrm{i}}^{(\ell)}} = \mpd^\ell \usi{\tens{q}^{(\ii)}} \usi{\tens{p}^{(\ii)}} P_{\sigma_p}^{\mathrm{MB}}(\tens{p}^{(\ii)}) \;,
\label{eq:04-4}
\end{equation}
where the quantities $\tens{q}^{(\ii)}$ and $\tens{p}^{(\ii)}$ hold the initial positions and momenta of the $\ell$ representative particles. It is also important to
notice that by specifying a maximum order of correlations to be taken into account, one can effectively truncate the series in \eqref{eq:04-2}. This is due to the fact that
a maximum order of correlations translates into a maximum number of lines in the diagrams making up $\hat{\mathcal{C}}^{(\ell)}_{\mathrm{con}}$. Since these diagrams are connected
one needs at least $\ell-1$ correlation lines to connect $\ell$ particles. A maximum number $n$ of correlation lines thus means a truncation at $\ell = n + 1$ particles.

\subsection{Effects of collective field and initial correlation operators}\label{ef_op}

The effects of collective field operators were derived in \citet{Bartelmann2014}. One particular advantage of the operator approach is that one may freely choose the order in which
the various functional derivatives are to be applied. Since the $B$ field operator factorizes as $\hat{b}_j(1) \, \hat{\phi}_{\rho_j}(1)$ it is advantageous
to first calculate density-only cumulants since mixed cumulants between density and response fields can be obtained from them by multiplying with appropriate $b(1)$ prefactors.

We adopt the notion of a particle $j$ `carrying' an external label $1$ if a single particle operator $\hat{\phi}_j$ is applied at label 1. Physically this
represents the contribution of the particle $j$ to the Fourier mode $\vec{k}_1$ of the collective field vector $\Phi$ at the time $t_1$. The effect of applying multiple
operators $\hat{\phi}_{\rho_j}(1)$ all belonging to the same particle $j$ but with different external labels will result in a shift of the $\tens{J}$ source,

\begin{equation}
 \hat{\phi}_{\rho_j}(1) \, \ldots \, \hat{\phi}_{\rho_j}(s_j) \, Z^{(\ell)}_0[\tens{J},\tens{K}] = Z_0^{(\ell)}[\tens{J}+\tens{L}_j,\tens{K}] \;,
\label{eq:04-5}
\end{equation}
with the shift tensor defined as
\begin{equation}
 \tens{L}_j(t) = - \sum_{\{s_j\}} \, \dirac\left(t-t_{s_j}\right) \, \cvector{ \vec{k}_{s_j} \\ 0 } \otimes \vec{e}_j 
\label{eq:04-6}
\end{equation}
and $\{s_j\}$ is the set of external labels carried by the particle $j$. For more than one particle, the shift tensors are added,
\begin{equation}
 \tens{L}(t) := \sum_{j=1}^\ell \, \tens{L}_j(t) \;.
\label{eq:04-7}
\end{equation}
Having applied all $\hat{\phi}_{\rho_j}$ for all particles present, we need to account for the effects of applying $\hat{b}_j(1)$ and the functional derivatives in
$\mathcal{C}_{\mathrm{con}}^{(\ell)}$ to $Z_0^{(\ell)}$. Application of $\hat{b}_j(1)$ will result in a $b_j(1)$ factor:
\begin{gather}
 \hat{b}_j(1) \, Z_0^{(\ell)}[\tens{J} + \tens{L},\tens{K}] = b_j(1) \, Z_0^{(\ell)}[\tens{J} + \tens{L},\tens{K}] \;, \notag \\
 b_j(1) = \left(-\ii \, \bsi{t}{t_\mathrm{i}}{t_\mathrm{f}} \scalar{\tens{J}(t) + \sum_{a=1}^\ell \, \tens{L}_a(t) }{G(t,t_1) \, \cvector{ 0_3 \\ \vec{k}_1} \otimes \vec{e}_j } \right) \;.
\label{eq:04-8}
\end{gather}
Once all derivatives have been applied we turn off the sources $\tens{J},\tens{K}$ and the factor reduces substantially to
\begin{equation}
 b_j(1) = \ii \, \vec{k}_1 \cdot \sum_{\{s_j\}} \, \vec{k}_{s_j} \, \g_{qp}(t_{s_j},t_1) \;.
\label{eq:04-9}
\end{equation}
In complete analogy we can define factors for the result of the application of the initial correlation operators.
\begin{equation}
 c_{\delta_i p_j} = -\ii \, \vec{C}_{\delta_i p_j} \cdot \sum_{\{s_j\}} \, \vec{k}_{s_j} \, \g_{qp}(t_{s_j},t_\mathrm{i}) \;,
\label{eq:04-10}
\end{equation}
\begin{align}
 c_{p_i p_j} = \exp\left\{- \left( \sum_{\{s_i\}} \vec{k}_{s_i} \, \g_{qp}(t_{s_i},t_\mathrm{i}) \right)^\top C_{p_i p_j} \right. \notag \\
               \left. \, \left( \sum_{\{s_j\}} \vec{k}_{s_j} \, \g_{qp}(t_{s_j},t_\mathrm{i}) \right) \right\} - 1 \;.
\label{eq:04-11}
\end{align}
Finally, the free non-averaged generating functional reduces to
\begin{align}
 \left. Z_0^{(\ell)}[\tens{J},\tens{K}] \right\vert_{\tens{J}=\tens{K}=0} =& \, \nexp{\ii \, \bsi{t}{t_\mathrm{i}}{t_\mathrm{f}} \scalar{\tens{L}(t)}{\mathcal{G}(t,t_\mathrm{i}) \, \tens{x}^{(\ii)}}} \notag \\
 =& \, \prod_{j=1}^\ell \, \nexp{-\ii \, \left(\sum_{\{s_j\}} \, \g_{qq}(t_{s_j},t_\mathrm{i}) \, \vec{k}_{s_j} \right) \cdot \vec{q}^{\,(\ii)}_j } \notag \\
  & \,\nexp{-\ii \, \left( \sum_{\{s_j\}} \, \g_{qp}(t_{s_j},t_\mathrm{i}) \, \vec{k}_{s_j} \right) \cdot \vec{p}^{\,(\ii)}_j } \;.
\label{eq:04-12}
\end{align}
Only two components of the free single particle propagator $G(t,t')$ are needed namely $\g_{qq}$ and $\g_{qp}$. In most systems we
have $\g_{qq}(t,t_\mathrm{i}) = \Theta(t-t_\mathrm{i}) = 1$ since we are only interested in times $t \geq t_\mathrm{i}$. Thus only $\g_{qp}$ remains. For convenience of notation
we now define
\begin{equation}
 \g_{12} \coloneqq \g_{qp}(t_1, t_2) \quad \text{and} \quad \g_1 \coloneqq \g_{qp}(t_1,t_\mathrm{i}) \;.
\label{eq:04-13}
\end{equation}
Concerning the $b_j$ factors we can now derive a straightforward theorem that will help us reduce the number of terms that we need to calculate later. 

\begin{theorem}
 If in a term contributing to some $G_{\Phi_{\alpha_1} \ldots \Phi_{\alpha_n}}^{(0,\ell)}$ a particle `$j$' carries only external labels belonging to
 $B$-fields, then the term vanishes.
\label{theo:04-1}
\end{theorem}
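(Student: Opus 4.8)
The plan is to reduce the statement to a single vanishing factor attached to particle $j$, exploiting only the factorized structure of the $B$-field operator together with two elementary properties of the propagator component $\g_{qp}$: its retardation and its vanishing at equal times. The starting point is the observation from \eqref{eq:04-3} that $\hat{\phi}_{B_j}(1) = \hat{b}_j(1)\,\hat{\phi}_{\rho_j}(1)$, so that every $B$-field label assigned to $j$ contributes a density piece $\hat{\phi}_{\rho_j}$, which merely shifts the source through $\tens{L}_j$ as in \eqref{eq:04-6}, and in addition a factor $b_j$. A particle carrying the $B$-labels $\{s_j\} = \{a_1,\ldots,a_m\}$ and nothing else therefore produces the multiplicative factor $\prod_{r=1}^m b_j(a_r)$ in the term, and it suffices to show that this product vanishes. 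A useful preliminary remark is that each $b_j(a_r)$ depends only on $\tens{J}$ and $\tens{L}$, not on $\tens{K}$, so the remaining $\tens{K}$-derivatives coming from the correlation operators $\hat{\mathcal{C}}^{(\ell)}_{\mathrm{con}}$ leave these factors untouched and the $b_j$ indeed factor out cleanly.

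Next I would use the reduced form \eqref{eq:04-9}, which after switching off the sources reads $b_j(a_r) = \ii\,\vec{k}_{a_r}\cdot\sum_{s\in\{s_j\}}\vec{k}_s\,\g_{qp}(t_s,t_{a_r})$. The decisive point is that the sum runs only over labels carried by the same particle $j$ — by hypothesis exactly the $B$-labels $\{a_1,\ldots,a_m\}$, since $j$ carries no density labels to enlarge $\{s_j\}$. I would then single out the label $a_{r^*}$ with the latest time, $t_{a_{r^*}} = \max_r t_{a_r}$, and examine the corresponding factor $b_j(a_{r^*})$. Every summand in it carries $\g_{qp}(t_{a_{r'}},t_{a_{r^*}})$ with $t_{a_{r'}} \le t_{a_{r^*}}$.

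The two structural properties of $\g_{qp}$ now close the argument. By retardation, $\g_{qp}(t,t') \propto \Theta(t-t')$, so every summand with $t_{a_{r'}} < t_{a_{r^*}}$ vanishes; and the equal-time value $\g_{qp}(t,t)=0$ — expressing that position cannot respond instantaneously to momentum — kills the remaining summands, namely $r'=r^*$ together with any ties at the maximal time. Hence $b_j(a_{r^*})=0$, the product $\prod_r b_j(a_r)$ vanishes, and with it the entire term.

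I expect the principal obstacle to be bookkeeping rather than conceptual. One must verify carefully that, once the bare sources are switched off, the factor $b_j$ genuinely couples only to labels sitting on the same particle $j$, so that no label of another particle and no density label sneaks into the sum in \eqref{eq:04-9} and spoils the ``latest time'' step; this is precisely what the $\vec{e}_j$ projection in \eqref{eq:04-8} guarantees. I would also make explicit that successive applications of $\hat{b}_j$ do not differentiate one another (again because $b_j$ is $\tens{K}$-independent), so that the contributions genuinely multiply, and I would state the two propagator identities I rely on as standing assumptions on the free single-particle dynamics.
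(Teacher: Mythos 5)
Your proof is correct, but it takes a genuinely different route from the paper's. The paper expands the product $b_j(1)\cdots b_j(n)$ into $(n-1)^n$ terms, views each term as a graph on the $n$ time coordinates with one $\g_{qp}$-line per factor, and argues that $n$ lines on $n$ points must close a loop; the cyclic chain of Heaviside functions in any loop forces all of its times to coincide, whereupon the equal-time zeros $\g_{qp}(t,t)=0$ kill the term. You avoid the expansion entirely by an extremal argument: the single factor $b_j(a_{r^*})$ attached to the \emph{latest} time among the labels carried by $j$ is already identically zero, since every summand in it carries $\g_{qp}(t_{a_{r'}},t_{a_{r^*}})$ with $t_{a_{r'}}\le t_{a_{r^*}}$, which vanishes either by retardation (strict inequality) or by the equal-time zero (ties, including $r'=r^*$). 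Both arguments rest on exactly the same two inputs, $\g_{qp}(t,t')\propto\Theta(t-t')$ and $\g_{qp}(t,t)=0$, and your bookkeeping — that the $\vec{e}_j$ projection in \eqref{eq:04-8} restricts the sum in \eqref{eq:04-9} to labels on particle $j$, and that the $b_j$ factors are $\tens{K}$-independent and hence commute past the correlation operators — is precisely what the paper implicitly relies on as well. What your version buys is brevity and the absence of any combinatorial counting; what the paper's loop formulation buys is a reusable statement (closed loops of retarded propagators vanish) that recurs naturally in the later diagrammatic perturbation theory. Either constitutes a complete proof of Theorem~\ref{theo:04-1}.
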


\begin{proof}
This is best done iteratively, beginning with a particle carrying only a single $B$-field label $1$.
Then immediately $b_j(1) = \ii \, k_1^2 \, \g_{11} = 0$ due to $\g_{11} = \g_{qp}(t_1,t_1) = 0$. For two $B$-fields $B(1), B(2)$ we have
\begin{align*}
  b_j(1) \, b_j(2) &= \left(\ii \, \vec{k}_1 \cdot \vec{k}_2 \, g_{21} \right) \, \left( \vec{k}_2 \cdot \vec{k}_1 \, g_{12} \right) \notag \\
                   &\propto \Theta(t_2-t_1) \, \Theta(t_1-t_2) \;.
\end{align*}
The only possibility for both Heaviside functions not to vanish would be $t_1=t_2$. But this leads to factors $g_{qp}(t_1,t_1) = g_{qp}(t_2,t_2) = 0$. 
   
In the general case of $n$ $B$-fields we use a diagrammatic argument. Picture every time coordinate included in the external labels as a point. Due to \eqref{eq:04-9} 
a factor $b_j(1)$ can be seen as a sum of lines between $t_1$ and all other $n-1$ time coordinates $t_i$ representing the $g_{i 1}$. The product $b_j(1) \, b_j(2) \ldots b_j(n)$
then results in a sum of $(n-1)^n$ terms each containing $n$ lines connecting all $n$ instances in time. As in the case $n=2$, any pair of points connected by two lines,
i.e.~a closed loop contributes zero. The argument for a 2-point loop is easily extended to a general $m$-point loop as
\begin{align*}
 \Theta(t_{i_1} - t_{i_2}) \, \Theta(t_{i_2}-t_{i_3}) &\ldots \Theta(t_{i_{m-1}}-t_{i_m}) \, \Theta(t_{i_m}-t_{i_1}) \notag \\ \notag \\ \Longrightarrow t_{i_m} \geq & \; t_{i_1} \geq t_{i_2} \geq \ldots \geq t_{i_m}
\end{align*}
Again this can only be satisfied if all time coordinates are identical which leads to vanishing propagators. But as we need to connect $n$ instances in time with $n$ lines
in each term of the product $b_j(1) \, b_j(2) \ldots b_j(n)$ there is no possibility of doing so without creating a closed loop and thus all terms vanish. 
\end{proof}
A straightforward corollary that we can immediately derive from this theorem is the following:
\begin{corollary}
Any non-interacting $B$-field-only cumulant vanishes: $G_{B_1 \, \ldots \, B_n}^{(0,\ell)} = 0$.
\label{theo:04-2}
\end{corollary}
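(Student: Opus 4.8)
The plan is to obtain this corollary as an almost immediate consequence of Theorem~\ref{theo:04-1}, which already carries the analytic content; the corollary is then purely combinatorial bookkeeping. First I would write $G_{B_1 \ldots B_n}^{(0,\ell)}$ out in the form \eqref{eq:04-2}. Using that each collective field operator is a sum over single-particle operators, $\Phi_\alpha = \sum_{j} \phi_{\alpha_j}$ as in \eqref{eq:03-21}, the application of the $n$ external $B$-field operators produces a sum of terms, one for every way of distributing the $n$ external labels among the $\ell$ representative particles. In each such term a particle $j$ carries some (possibly empty) subset $\{s_j\}$ of labels, and because every operator here is of $B$-type, $\hat{\phi}_{B_j}(1) = \hat{b}_j(1)\,\hat{\phi}_{\rho_j}(1)$ from \eqref{eq:04-3}, each label landing on particle $j$ contributes one factor $b_j$ of the form \eqref{eq:04-9}.

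The key observation is that a $B$-field-only cumulant contains no $\rho$-type labels whatsoever. Hence any particle that carries at least one external label carries \emph{only} $B$-field labels. Since $n \ge 1$, the labels must be placed somewhere, so in every single term of the sum there is at least one particle carrying one or more labels, and that particle therefore carries exclusively $B$-field labels. Theorem~\ref{theo:04-1} then applies to that particle and forces the entire term to vanish. As this reasoning is valid term by term, the whole sum $G_{B_1 \ldots B_n}^{(0,\ell)}$ vanishes for each $\ell$, which is exactly the claim; summing over $\ell$ additionally gives the vanishing of the full cumulant.

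The only point that requires a moment's care, and what I would flag as the (very mild) main obstacle, is the presence of representative particles that carry \emph{no} external label. Such particles can still appear inside the connected correlation structure generated by $\hat{\mathcal{C}}^{(\ell)}_{\mathrm{con}}$, yet they produce no $b_j$ factors, so one must check that they cannot rescue a term. They cannot: Theorem~\ref{theo:04-1} requires only a \emph{single} particle carrying exclusively $B$-labels to annihilate a term, and the existence of label-free particles is irrelevant to that conclusion. It therefore suffices to exhibit one label-carrying particle per term, which the $n \ge 1$ counting argument always supplies, and no explicit evaluation of the $b_j$ factors or of the correlation operators is needed.
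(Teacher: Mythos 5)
Your argument is correct and is exactly the route the paper intends: the corollary is stated there as an immediate consequence of Theorem~\ref{theo:04-1}, with no further proof given, and your term-by-term bookkeeping (every label-carrying particle in a $B$-only cumulant carries exclusively $B$-labels, and $n\ge 1$ guarantees such a particle exists in every term) is precisely the omitted justification. Nothing is missing.
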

The above Theorem \ref{theo:04-1} is merely a consequence of the theory respecting the causality of interactions. We can derive yet another theorem.
\begin{theorem}
 If in a term contributing to some $G_{\Phi_{\alpha_1} \, \ldots \, \Phi_{\alpha_n}}^{(0,\ell)}$ one of the particles involved, say particle `$a$', carries no external labels,
 i.e.~no $\hat{\phi}_{\rho_a}$ has been applied, then the term vanishes.
\label{theo:04-3}
\end{theorem}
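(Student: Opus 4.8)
The plan is to exploit the fact that a particle carrying no external labels has an empty label set $\{s_a\}=\emptyset$. Reading off the reduced free functional \eqref{eq:04-12}, the single-particle factor for $a$ then reduces to $\e^{0}\,\e^{0}=1$, since both sums $\sum_{\{s_a\}}\g_{qq}\,\vec{k}_{s_a}$ and $\sum_{\{s_a\}}\g_{qp}\,\vec{k}_{s_a}$ are empty. Hence, apart from the Maxwell--Boltzmann weight sitting in $\tr_a$, the only dependence on $(\vec{q}_a^{(\ii)},\vec{p}_a^{(\ii)})$ enters through the initial-correlation lines attached to particle $a$. Because $a$ belongs to a connected $\ell$-point diagram in $\hat{\mathcal{C}}^{(\ell)}_{\mathrm{con}}$, it must carry at least one such line, and I would organise the proof by classifying these lines according to whether they meet $a$ at a momentum end or at a density ($\delta$) end.

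First I would dispose of every line meeting $a$ at a momentum end, i.e.\ $\hat{C}_{p_a p_k}$ or $\hat{C}_{\delta_k p_a}$. From \eqref{eq:04-10} and \eqref{eq:04-11}, the factor produced by such a line always carries the sum $\sum_{\{s_a\}}\vec{k}_{s_a}\,\g_{qp}(t_{s_a},t_\mathrm{i})$ attached to the momentum index $a$. For $\{s_a\}=\emptyset$ this is an empty sum and therefore zero, so $c_{\delta_k p_a}=0$ and $c_{p_a p_k}=\e^{0}-1=0$. Consequently any term in which $a$ sits at the momentum end of even a single line vanishes outright; this missing-label mechanism is the structural analogue of the empty/degenerate-factor argument that already drove Theorem~\ref{theo:04-1}.

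It then remains to treat the case where $a$ is attached only through its density end. Rules~2--4 forbid two solid $\delta$-lines from meeting at the same particle, so $a$ carries at most one such line and, by connectedness, exactly one: either $\hat{C}_{\delta_a p_b}$ or $C_{\delta_a \delta_b}$. In both situations $\vec{p}_a^{(\ii)}$ appears nowhere, so the $\tr_a$ momentum integral is just the Maxwell--Boltzmann normalisation, while the entire $\vec{q}_a^{(\ii)}$-dependence sits in the single correlation function $\vec{C}_{\delta_a p_b}(\vec{q}_a^{(\ii)}-\vec{q}_b^{(\ii)})$ or $C_{\delta_a \delta_b}(\vec{q}_a^{(\ii)}-\vec{q}_b^{(\ii)})$. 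I would then carry out the free integral $\umi{3}{q_a^{(\ii)}}$, which by statistical homogeneity extracts the zero-wavenumber Fourier transform of the attached correlation. For the density--momentum line, isotropy makes $\vec{C}_{\delta p}$ an odd vector function of its argument, so its integral $\tilde{\vec{C}}_{\delta p}(0)$ vanishes; for the density--density line the same integral equals the initial density power spectrum at $k=0$, which vanishes for the fields under consideration. Either way the lone surviving factor is killed and the whole term vanishes.

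The momentum-end step is the clean, purely diagrammatic part and I expect no difficulty there. The genuine obstacle I anticipate is the pure density--density attachment: its vanishing does \emph{not} follow from the combinatorics alone but hinges on $\int\mathrm{d}^3 r\,C_{\delta\delta}(r)=P_\delta(0)=0$, a spectral property of the initial random field rather than of the diagram topology. I would therefore make this input explicit and check its consistency with the homogeneous and isotropic setup assumed throughout (the density--momentum case, by contrast, requires only isotropy). A secondary point to verify is that the connectedness of $\hat{\mathcal{C}}^{(\ell)}_{\mathrm{con}}$ indeed guarantees at least one line on $a$, so that the case distinction above is exhaustive.
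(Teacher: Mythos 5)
Your proof is correct and follows essentially the same route as the paper's: the empty label set $\{s_a\}=\emptyset$ kills every factor in which a line attaches to $a$ at a momentum end, and the surviving case of a single $\delta$-end attachment is killed by the $\vec{q}_a^{\,(\ii)}$ integral, which extracts the vanishing zero mode of the correlation. The only cosmetic differences are that you organise the case split by the type of line end at $a$ rather than by whether $a$ is interior or at the boundary of the diagram, and that you justify $\int\mathrm{d}^3q_a\,\vec{C}_{\delta_a p_b}=0$ by isotropy (oddness of the vector correlation) where the paper invokes particle conservation of the density contrast; both are valid, and you correctly flag that the $C_{\delta\delta}$ case genuinely requires the spectral input $P_0(0)=0$.
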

\begin{proof}
 Since the particle $a$ is involved in the cumulant it must be connected in a diagrammatic sense with one of the three line types of our diagrammatic representation.
 
 If the particle is `inside the diagram', i.e.~connected to more than one line, then the diagram rules demand that at least one of the lines connecting to $a$ is
 of the dashed $p$-type. Thus either a factor of $c_{\delta_i p_a}$ or $c_{p_i p_a}$ is present. But since $a$ carries no external labels the set $\{s_a\}$ in \eqref{eq:04-10},
 \eqref{eq:04-11} is empty and thus $c_{\delta_i p_a} = 0 = c_{p_i p_a}$.
 
 If the particle is `at the boundary of the diagram', i.e.~connected to only one line, then we need to distinguish three different cases.
 
 \begin{itemize}
  \item The particle is connected to a dashed $p$-type line. The same argument as above applies.  
  \item It is connected by a $C_{\delta_a \delta_j}$ line. Since $\{s_a\}$ is empty we have for the factor from the free generating functional
        \begin{equation*}
         \nexp{-\ii \, \left(\sum_{\{s_a\}} \,\vec{k}_{s_a} \right) \cdot \vec{q}^{\,(\ii)}_a } = \exp(0) = 1
        \end{equation*}
        and thus the only quantity left that depends on $\vec{q}^{\,(\ii)}_a$ is $C_{\delta_a \delta_j}$. This leaves us with
        \begin{equation*}
         \umi{3}{q^{(\ii)}_a} C_{\delta_a \delta_j} = \umi{3}{q^{(\ii)}_a} \, \xi\left(\vec{q}^{\,(\ii)}_a,\vec{q}^{\,(\ii)}_j \right) = 0
        \end{equation*}
        since we define our 2-point correlation function $\xi$ as the Fourier transform of a power spectrum $P_0(k)$ which vanishes at $k=0$.        
  \item It is connected to the solid $\delta$-side of a $\vec{C}_{\delta_a p_i}$. With the arguments from the previous case we have
        \begin{equation*}
         \umi{3}{q^{(\ii)}_a} \vec{C}_{\delta_a p_i} = \left\langle \left( \umi{3}{q^{(\ii)}_a} \delta\left(\vec{q}^{\,(\ii)}_a\right) \right) \vec{p}\left(\vec{q}^{\,(\ii)}_i\right) \right\rangle = 0
        \end{equation*}
        since the density contrast must obey particle conservation. Notice that $\vec{p}\left(\vec{q}^{\,(\ii)}_i\right)$ is the initial momentum field evaluated at the
        initial position of the particle $i$.
 \end{itemize}
\end{proof}
We can now combine Theorems \ref{theo:04-1} and \ref{theo:04-3} to derive another corollary.
\begin{corollary}
  For all $\ell > n$, the cumulant $G^{(0,\ell)}_{\rho_1 \, \ldots \, \rho_n B_{n+1} \, \ldots \, B_{n+m}} = 0$ .
\label{theo:04-4}
\end{corollary}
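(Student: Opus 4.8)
The plan is to combine Theorems \ref{theo:04-1} and \ref{theo:04-3} through a short counting (pigeonhole) argument. The starting point is that each of the $n+m$ external labels of $G^{(0,\ell)}_{\rho_1 \ldots \rho_n B_{n+1} \ldots B_{n+m}}$ is carried by exactly one of the $\ell$ representative particles, in the sense that a single $\hat{\phi}_{\rho_j}$ operator is applied at that label. Here it is essential to recall that the $B$-field operator factorizes as $\hat{\phi}_{B_j}(i) = \hat{b}_j(i)\,\hat{\phi}_{\rho_j}(i)$: a $B$-label is therefore also carried through a density operator, but it additionally produces a $b_j(i)$ factor that distinguishes it from a genuine $\rho$-label.

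First I would determine which particles can carry which labels in a \emph{surviving} term. By Theorem \ref{theo:04-3}, every one of the $\ell$ particles must carry at least one external label, or the term vanishes. By Theorem \ref{theo:04-1}, no particle may carry \emph{only} $B$-field labels. Combining these two constraints, in any nonvanishing term every particle must carry at least one of the $n$ density labels $\rho_1, \ldots, \rho_n$.

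Next I would invoke the pigeonhole principle. Since there are exactly $n$ density labels and each is assigned to a single particle, at most $n$ of the $\ell$ particles can carry a density label. The requirement that \emph{all} $\ell$ particles carry at least one density label can therefore only be met when $\ell \le n$. Consequently, for every $\ell > n$ no admissible label assignment exists: in each individual term at least one particle either carries no label at all (killed by Theorem \ref{theo:04-3}) or carries only $B$-labels (killed by Theorem \ref{theo:04-1}). Since both theorems hold term by term, summing over all contributing diagrams yields $G^{(0,\ell)}_{\rho_1 \ldots \rho_n B_{n+1} \ldots B_{n+m}} = 0$.

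The argument is purely combinatorial once the two theorems are available, so there is no real analytic difficulty. The only point demanding care is the bookkeeping: the counting must track the density labels specifically rather than all external labels, and the two vanishing criteria must be applied to the appropriate particles. The main (mild) subtlety is verifying that the two theorems \emph{together} force at least one density label per particle, which is precisely what rules out the case $\ell > n$.
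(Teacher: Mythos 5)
Your proof is correct and follows essentially the same route as the paper: both arguments expand the cumulant into terms where each label is assigned to one particle, then use Theorem \ref{theo:04-3} and Theorem \ref{theo:04-1} to conclude that a surviving term requires every one of the $\ell$ particles to carry at least one of the $n$ density labels, which is impossible for $\ell > n$ by pigeonhole. Your contrapositive phrasing is a slightly cleaner packaging of the paper's statement that for $\ell>n$ some particle must either carry no label or only $B$-labels, but the underlying argument is identical.
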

\begin{proof}
 We begin by considering density-only cumulants first. These are given by
 \begin{align}
  G^{(0,\ell)}_{\rho_1  \ldots \rho_n} = \usi{\Gamma_\mathrm{i}} \mathcal{C}^{(\ell)}_{\mathrm{con}} \,  &\left( \sum_{j=1}^\ell \, \hat{\phi}_{\rho_j}(1) \right) \ldots \notag \\
  & \ldots \left( \sum_{j=1}^\ell \, \hat{\phi}_{\rho_j}(\mathrm{n}) \right) \, \left. Z^{(\ell)}_0[\tens{J},\tens{K}] \, \right\vert_{\tens{J}=\tens{K}=0} \;.
  \label{eq:04-14}
 \end{align}
 Expanding the product of operators we get a sum where in each term every external label appears exactly once, i.e.~each term consists of $n$ factors.
 For $\ell > n$, there will be at least one particle in each term that will not carry an external label and thus all terms vanish according to Theorem \ref{theo:04-3}. Next we can add any number $m$ of $B$-fields to the
 correlator. This leads to a sum of terms with $n+m$ factors. But as $\ell > n$ for every term we still have at least one particle which either carries no external label
 which causes the term to vanish or it only carries $B$-field indices which also makes the term vanish according to Theorem \ref{theo:04-1}. 
\end{proof} 
It is also interesting to note that the scaling of any cumulant $G^{(0,\ell)}_{\Phi_{\alpha_1} \, \ldots \, \Phi_{\alpha_n}}$ with the mean particle density $\mpd$
is only controlled by the number of representative particles due to $\varSigma_\ell \propto \mpd^\ell$. This relates to the fact that in a continuous fluid picture
of the density field any density-only $n$-point cumulant $G^{(0)}_{\rho_1 \, \ldots \, \rho_n}$ has only terms scaling with $\mpd^n$. In our particle picture
we have terms of all possible scalings $\mpd^\ell$ for $1 \leq \ell \leq n$ due to \eqref{eq:04-2}. All terms with $\ell < n$ may consequently be interpreted as shot noise terms where
some of the $n$ particles have been identified with each other, thus reducing the amount of possible initial correlation.

We can also consider an $n$-point density-only cumulant with $\ell=n$ and then change any number of density fields into response fields. According to Corollary \ref{theo:04-4}
the resultant cumulant will vanish unless we also reduce the number of particles $\ell$ by one for each $B$-field. Just as in the canonical ensemble we thus see that for fixed
$n$-point order the response field leads to an identification of particles since it encodes how the effects of interactions are transported forward through time by single particles
which then may contribute to other collective fields at a later time.

\subsection{Hierarchy of label combinatorics}\label{org_comb}

A general density-only $\ell$-particle $n$-point cumulant has the form given in \eqref{eq:04-14}. We first multiply out all the sums into individual terms.
Next we organise them in a hierarchical fashion.

The first level of the hierarchy should tell us how many of the $n$ external labels are carried by each of the $\ell$ particles. We will call this category a `label distribution' 
$\#_1|\ldots|\#_\ell$. An example for $\ell=2$ would be $1|2$ where one particle carries one label and the other two labels. Note that we do not ask which particle carries the single
label, both possibilities are in the same distribution and the same holds for higher $\ell$.

The next lower level category is called a `label grouping' $(\{s_1\}; \ldots ; \{s_\ell\})$ specifying which specific labels are grouped due to being carried by
the same particle. Considering an $\ell=3$ particle $n=4$-point cumulant a possible label grouping would be $(1;2;3,4)$ as well as $(1;3;2,4)$ and both belong to the label
distribution $1|1|2$.

The lowest category is a `labeling' representing a single term in \eqref{eq:04-14}. For the $\ell=3$ particle $n=4$-point cumulant with particles
$a,b,c$ one could realise $(1;2;3,4)$ as e.g.~$a_1 b_2 c_3 c_4$, $b_1 c_2 a_3 a_4$, $a_1 c_2 b_3 b_4$ and so on. The hierarchy for this example can be seen in Fig.~\ref{fig:04-1}.
\begin{figure}[htp]
 \centering
 \includegraphics[scale=0.5]{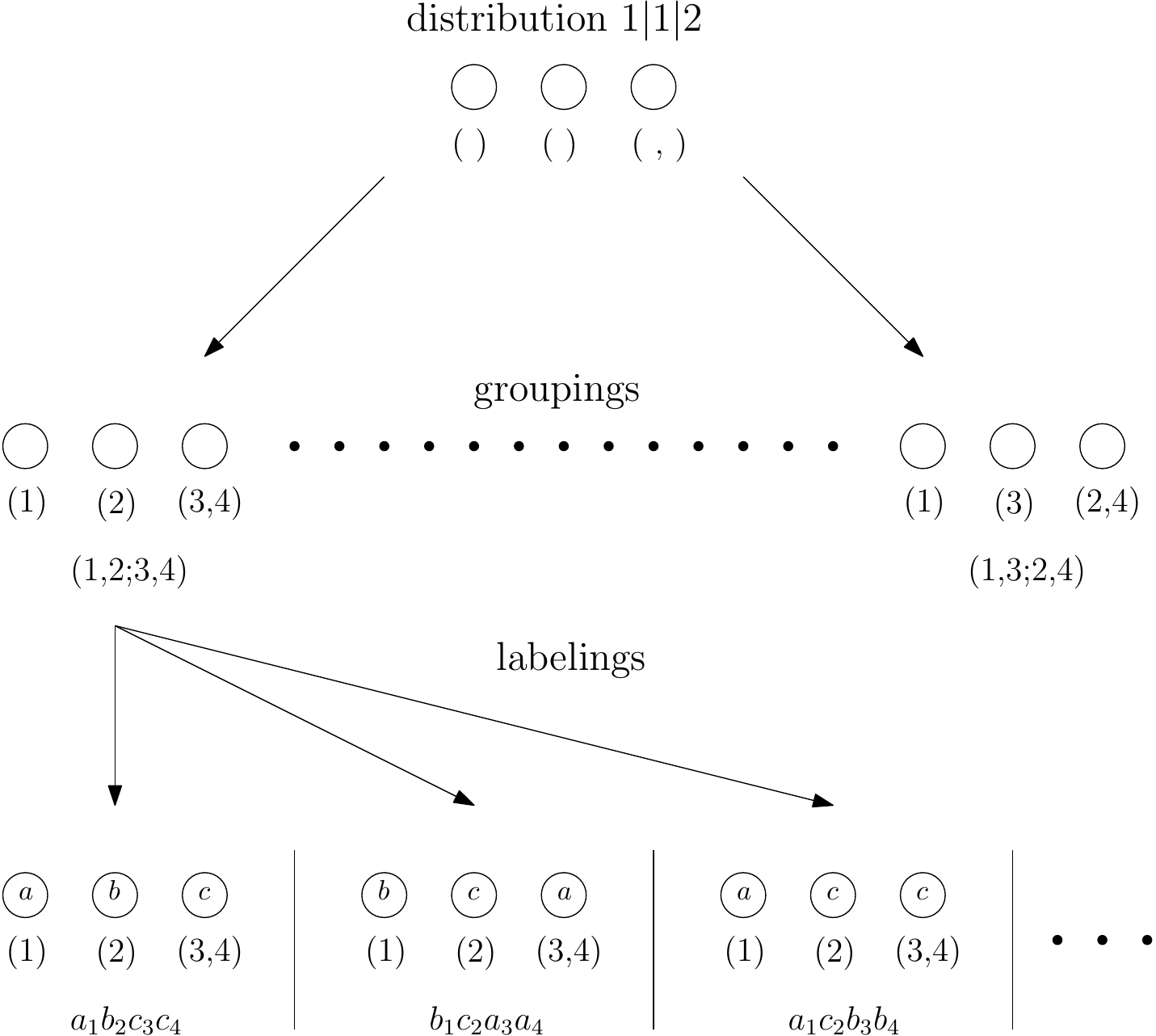}
 \caption{The hierarchy of external labels.}
\label{fig:04-1}
\end{figure}
The important point to realise is that the integral over initial momenta will result in a factor which only depends on the label grouping and not on the individual labeling.
This factor turns out to be a damping factor of the form
\begin{gather}
 \mathfrak{D}_{(\{s_1\}; \ldots ; \{s_\ell\})} := \prod_{j=1}^\ell \, \nexp{-\frac{\sigma_p}{2} \left( \sum_{\{s_j\}} \g_{s_j} \vec{k}_{s_j} \right)^2 } \notag \\
                                                = \usi{\tens{p}^{(\ii)}} \prod_{j=1}^\ell \, P^{\mathrm{MB}}_{\sigma_p}(\vec{p}^{\,(\ii)}_j) \, \nexp{-\ii \, \left( \sum_{\{s_j\}} \g_{s_j} \vec{k}_{s_j} \right) \cdot \vec{p}^{\,(\ii)}_j } \;.
\label{eq:04-15}
\end{gather}
It is thus a good idea to organize cumulants in terms of label groupings. The general strategy for calculating any density-only cumulant can thus be summarized as:
\begin{itemize}
 \item Find all possible label distributions $\#_1 | \ldots | \#_\ell$ where all particles carry at least one label (distributions where one particle does not carry a label vanish due
       to Theorem \ref{theo:04-3}).
 
 \item For each of these identify all possible label groupings and pick an arbitrary representative.
       
 \item For this representative label grouping go through all possible labelings, execute the integral over the initial particle positions taking into account diagrams
       of initial correlations up to a desired order. Gather the results into one function $\mathcal{T}^{(\ell)}_{\#_1 | \ldots | \#_\ell}$. The complete contribution from a
       label distribution $\#_1 | \ldots | \#_\ell$ is obtained by summing $\mathcal{T}^{(\ell)}_{\#_1 | \ldots | \#_\ell}$ evaluated with all groupings of external labels belonging
       to the distribution.
\end{itemize}
This has the advantage that for any distribution one only needs to calculate a single function. Another benefit of this approach is 
that we directly see how labels of the collective fields are grouped onto particles, allowing us to drop certain terms according to Theorem \ref{theo:04-1}. We will see how
this works in detail when we compute the general $\ell=2$ particle cumulants.

\subsection{Explicit form of initial correlations}

Since the primary motivation for this work is cosmological structure formation and because we want to be able to give explicit expressions for the cumulants we now
adopt the specific form of the 2-particle correlations given in \citep{Bartelmann2014}. The initial density field is characterized by the power spectrum of its contrast in the
standard way as
\begin{equation}
 \ave{\delta\left(\vec{k}_1\right) \, \delta\left(\vec{k}_2\right)} = (2\pi)^3  \dirac\left(\vec{k}_1 + \vec{k}_2\right) \, P_0(k_1) \;.
\label{eq:04-16}
\end{equation}
The initial momentum field is assumed to be irrotational so there exists a momentum potential $\psi$ with
\begin{equation}
 \vec{p}^{\,(\ii)}_j = \alpha \left. \nabla \psi\left( \vec{r} \right) \right\vert_{\vec{r} = \vec{q}^{\,(\ii)}_j} \;.
\label{eq:04-17}
\end{equation}
The continuity equation demands that this potential is linked to the initial density contrast as
\begin{equation}
 \delta_j = \delta(\vec{q}_{j,0}) = - \left. \left( \nabla^2 \, \psi(\vec{r}) \right) \right\vert_{\vec{r}=\vec{q}^{\,(\ii)}_j} \;.
\label{eq:04-18}
\end{equation}
This allows us to express all three types of correlations as Fourier transforms of the density power spectrum. Due to statistical homogeneity
and isotropy they will only depend on the modulus of the separation vector of the two particles involved.
\begin{equation}
 C_{\delta_i \delta_j} = \fmi{3}{h} \e^{\ii \, \vec{h} \cdot \left(\vec{q}^{\,(\ii)}_i - \vec{q}^{\,(\ii)}_j \right)} \, P_0(h)
\label{eq:04-19}
\end{equation}
\begin{equation}
 \vec{C}_{\delta_i p_j} = -\ii \, \alpha \, \fmi{3}{h} \e^{\ii \, \vec{h} \cdot \left(\vec{q}^{\,(\ii)}_i - \vec{q}^{\,(\ii)}_j \right)} \, \frac{\vec{h}}{h^2} \, P_0(h)
\label{eq:04-20}
\end{equation}
\begin{equation}
 C_{p_i p_j} = \alpha^2 \fmi{3}{h} \e^{\ii \, \vec{h} \cdot \left(\vec{q}^{\,(\ii)}_i - \vec{q}^{\,(\ii)}_j \right)} \, \frac{\vec{h} \otimes \vec{h}}{h^4} \, P_0(h)
\label{eq:04-21}
\end{equation}

\subsection{1-particle cumulants}

The 1-particle cumulants $G_{\Phi_{\alpha_1} \, \ldots \, \Phi_{\alpha_n}}^{(0,1)}$ can be written down directly to any desired order in $n$ as was already shown in
\citep{Das2012,Mazenko2010}. Initial correlations have no effect as $\mathcal{C}^{(1)}_{\mathrm{con}} = 1$ in \eqref{eq:04-14}. Since there is only one particle,
the only possible grouping of external labels is their entirety, i.e.~$\{s_a\}=\{1,\ldots,n\}$. This means that integrating the spatial part of \eqref{eq:04-12} over
the initial position of the single particle $a$ gives
\begin{equation}
\umi{3}{q^{(\ii)}_a} \nexp{-\vec{q}^{\,(\ii)}_a \cdot \sum_{\{s_a\}} \vec{k}_{s_a}} = (2\pi)^3 \dirac\left( \sum_{s=1}^n \, \vec{k}_s \right) \;.
\label{eq:04-22}
\end{equation}
Combining this with the Gaussian cutoff, the 1-particle contribution to the $n$-point density-only cumulant is given by
\begin{equation}
 G_{\rho_1  \ldots \rho_n}^{(0,1)} = \mpd \, (2\pi)^3 \dirac\left( \sum_{s=1}^n \, \vec{k}_s \right) \nexp{-\frac{\sigma_p^2}{2} \sum_{s=1}^n \, \g_{s} \, \vec{k}_s} \;.
\label{eq:04-23}
\end{equation}
Any cross correlator between $\rho$ and $B$ is obtained by applying the appropriate $b_a(s) = b(s)$ factor 
\begin{equation}
 G_{\rho_1 \ldots \rho_n B_{n+1} \ldots B_{n+m}}^{(0,1)} = b(n+1) \, \ldots \, b(n+m) \, G_{\rho_1 \, \ldots \, \rho_n}^{(0,1)} \;.
\label{eq:04-24}
\end{equation}
The most interesting cases are the one and two point cumulants.
\begin{equation}
 G^{(0,1)}_{\rho_1} = \mpd \, (2\pi)^3 \dirac\left(\vec{k}_1\right) \qquad G^{(0,1)}_{B_1} = 0 
\label{eq:04-25}
\end{equation}
\begin{gather}
 G^{(0,1)}_{\rho_1 \rho_2} = \mpd \, (2\pi)^3 \dirac\left(\vec{k}_1 + \vec{k}_2\right) \, \nexp{-\frac{\sigma_p^2}{2} \, k_1^2 (\g_1-\g_2)^2} \notag \\
 G^{(0,1)}_{\rho_1 B_2}    = -\ii \, k_1^2 \, \g_{12} G^{(0,1)}_{\rho_1 \rho_2} \notag \\
 G^{(0,1)}_{B_1 \rho_2}    = -\ii \, k_1^2 \, \g_{21} G^{(0,1)}_{\rho_1 \rho_2} \notag \\
 G^{(0,1)}_{B_1 B_2}       = 0
\label{eq:04-26}
\end{gather}
As expected the 1-point correlator of the density field just gives the mean particle density. The interpretation of the 2-point density cumulant as an exponentially damped shot-noise
contribution was motivated at the end of section \ref{ef_op}. In our next work \citep{Fabis2015a}, we will show that $G^{(0,1)}_{\rho B}$ and
$G^{(0,1)}_{B \rho}$ can be understood as propagators for the density field $\rho$ in a statistical sense.

\subsection{2-particle cumulants}

The two-particle correlation operator $\mathcal{C}^{(2)}_{\mathrm{con}}$ is the argument of the two-particle sum seen in Fig.~\ref{fig:03-5}
missing only a prefactor of $2!^{-1}$. It is advantageous to note that the general expression \eqref{eq:04-14}
for a density-only cumulant is by construction invariant under particle exchange or renumbering. We use this to combine diagrams that transform into one another under such
renumbering, i.e.~those diagrams that have the same non-invariant topology. For the present $\ell=2$ case this is shown in Fig.~\ref{fig:04-1}.
Next we have to think about which of the diagrams in Fig.~\ref{fig:03-5} we want to include.
We aim to take into account all terms up to second order in the initial power spectrum $P_0$. This means that all but the last diagram must be considered.
The $\hat{C}_{p_a p_b}$-line represents an exponential function of $C_{p_a p_b} \propto P_0$, while the other two line types are linear in $P_0$.
\begin{figure}[htp]
 \centering
 \includegraphics[scale=0.6]{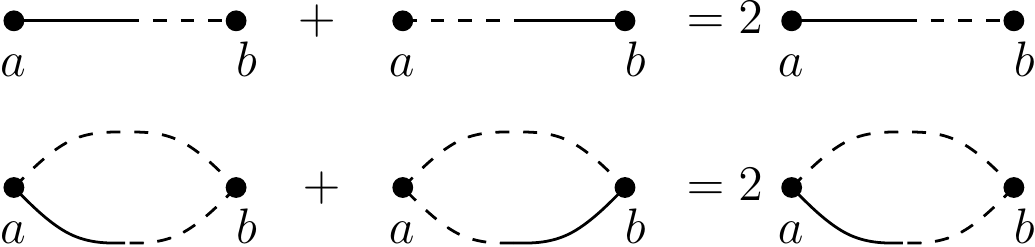}
 \caption{Combining diagrams by particle exchange/relabeling.}
\label{fig:04-2}
\end{figure}
We thus expand the isolated $\hat{C}_{p_a p_b}$-line up to second order which leads to an expansion of the $c_{p_a p_b}$ factor of \eqref{eq:04-11} as
\begin{align}
 c_{p_a p_b} &= -\left( \sum_{\{s_a\}} \vec{k}_{s_a} \, \g_{s_a} \right)^\top C_{p_a p_b} \left( \sum_{\{s_b\}} \vec{k}_{s_b} \, \g_{s_b} \right) \notag \\
             & \; +\frac{1}{2} \left[ \left( \sum_{\{s_a\}} \vec{k}_{s_a} \, \g_{s_a} \right)^\top C_{p_a p_b} \left( \sum_{\{s_b\}} \vec{k}_{s_b} \, \g_{s_b} \right) \right]^2 + \ldots
\label{eq:04-27}
\end{align}
All other dashed lines inside a larger diagram only represent the linear term of \eqref{eq:04-27}. Overall, this leads to the contributions shown in Figs.~\ref{fig:04-3} and
\ref{fig:04-4}. The simple dashed $\hat{C}_{p_a p_b}$-line in both orders only represents the respective first and second order term of its expansion.
\begin{figure}[htp]
 \centering
 \includegraphics[scale=0.6]{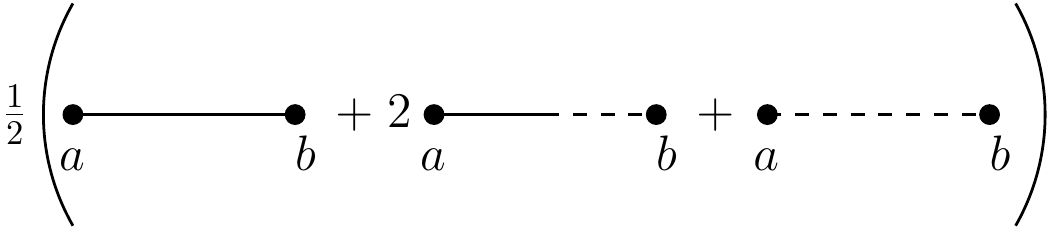}
 \caption{First order in $P_0$ contribution to $\mathcal{C}^{(2)}_{\mathrm{con}}$.}
\label{fig:04-3}
\end{figure}
\begin{figure}[htp]
 \centering
 \includegraphics[scale=0.5]{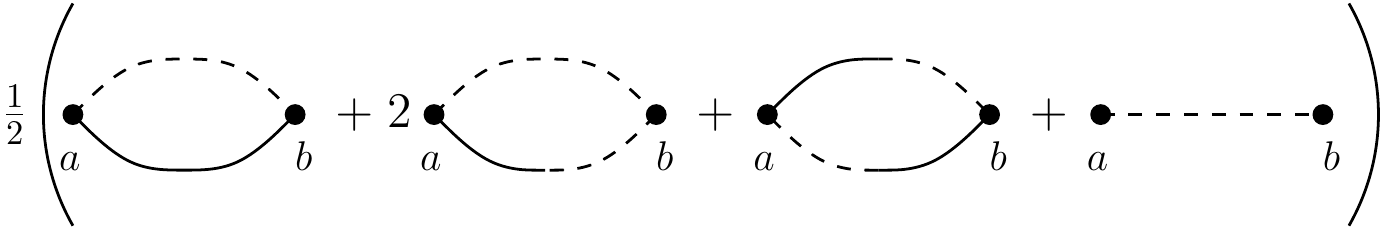}
 \caption{Second order in $P_0$ contribution to $\mathcal{C}^{(2)}_{\mathrm{con}}$.}
\label{fig:04-4}
\end{figure}

In order to write the extensive expressions for the contribution of the individual diagrams to the density-only cumulant in a more compact form we introduce the following shorthand
notation for sums of wavevectors over label sets
\begin{equation}
\vec{k}_{\{s\}} \coloneqq \sum_{\{s\}} \vec{k}_s \;.
\label{eq:04-28}
\end{equation}
Since they will show up in the factors coming from initial correlations, e.g.~\eqref{eq:04-27}, we also define sums of the combination of a free propagator and a wavevector in
accordance with \citep{Bartelmann2014b} as
\begin{equation}
 \vec{K}_{\{s\}} \coloneqq \sum_{\{s\}} \vec{K}_s = \sum_{\{s\}} g_s \vec{k}_s \;.
\label{eq:04-29}
\end{equation}
These quantities are not to be confused with the source terms $\vec{K}_j$ for individual particles which have been turned off by now.

We are now ready to implement the strategy for organising the different terms contributing to an $n$-point cumulant as discussed in section \ref{org_comb}.
For $\ell=2$ particles with indices $a$ and $b$, all non-vanishing labeling distributions are given by $m|(n-m)$ with
$1 < m \leq \lfloor n/2 \rfloor$ and all label groupings $(\{s_1\};\{s_2\}) = \{i_1,\ldots,i_m ; j_1,\ldots,j_{n-m}\}$ contain only two labelings
$a_{i_1} \ldots a_{i_m} b_{j_1} \ldots b_{j_{n-m}}$ and $a_{j_1} \ldots a_{j_{n-m}} b_{i_1} \ldots b_{i_m}$.
Let us pick such a general label grouping $(\{s_1\};\{s_2\})$ and start with the simplest diagram, the $C_{\delta_a \delta_b}$-line in Fig.~\ref{fig:04-2}.
The damping function $\mathfrak{D}_{(\{s_1\};\{s_2\})}$ will be omitted for now. As the diagram is invariant
under particle exchange both labelings contained in any label grouping give the same result and thus a factor of $2$. Using the spatial part of \eqref{eq:04-12} and \eqref{eq:04-19}
we find
\begin{align}
 & 2 \frac{1}{2} \mpd^2 \umi{3}{q^{(\ii)}_a} \umi{3}{q^{(\ii)}_b} \e^{-\ii \, \vec{q}^{\,(\ii)}_a \cdot \vec{k}_{\{s_1\}}} \, \e^{-\ii \, \vec{q}^{\,(\ii)}_b \cdot \vec{k}_{\{s_2\}}} \notag \\
 & \fmi{3}{h} \e^{\ii \, \vec{h} \cdot \left(\vec{q}^{\,(\ii)}_a - \vec{q}^{\,(\ii)}_b \right)} P_0(h)  \notag \\
 =\, &\mpd^2 (2\pi)^6 \fmi{3}{h} \dirac\left(\vec{h} - \vec{k}_{\{s_1\}} \right) \, \dirac\left(\vec{h} + \vec{k}_{\{s_2\}} \right) \, P_0(h) \notag \\
 =\, &\mpd^2 (2\pi)^3 \dirac\left(\sum_{s=1}^n \, \vec{k}_s\right) P_0\left(\big| \vec{k}_{\{s_1\}} \big|\right) \;.
\label{eq:04-30}
\end{align}
The next diagram is the bare $\hat{C}_{\delta_a p_b}$-line in Fig.~\ref{fig:04-3}. The calculation proceeds in largely the same way as for the first diagram except that the two
labelings contained in the label grouping now give different results since the diagram is not symmetric anymore. Using \eqref{eq:04-10},\eqref{eq:04-12} and \eqref{eq:04-20} we find
the result given in \eqref{eq:04-31} for the $a_{i_1} \ldots a_{i_m} b_{j_1} \ldots b_{j_{n-m}}$ realisation and the second realisation gives the same result just with the external
label sets $\{s_1\} = \{i_1,\ldots,i_m \}$ and $\{s_2\} = \{j_1,\ldots,j_{n-m}\}$ exchanged:
\begin{align}
 &\phantom{{}={}} \frac{1}{2} 2 \, \mpd^2 \umi{3}{q^{(\ii)}_a} \umi{3}{q^{(\ii)}_b} \e^{-\ii \, \vec{q}^{\,(\ii)}_a \cdot  \vec{k}_{\{s_1\}}} \, \e^{-\ii \, \vec{q}^{\,(\ii)}_b \cdot \vec{k}_{\{s_2\}}} \notag \\
 &\phantom{{}={}} \left( (-\ii)^2 \alpha \fmi{3}{h} \e^{\ii \, \vec{h} \cdot \left(\vec{q}^{\,(\ii)}_a - \vec{q}^{\,(\ii)}_b \right)} \, P_0(h) \, \frac{\vec{h}}{h^2} \right) \cdot \vec{K}_{\{s_2\}} \displaybreak[0] \notag \\
 &=-\alpha \mpd^2 (2\pi)^6 \fmi{3}{h} \dirac\left(\vec{h} - \vec{k}_{\{s_1\}} \right) \, \dirac\left(\vec{h} + \vec{k}_{\{s_2\}} \right) \times \notag \\
 &\phantom{{}={}-\alpha \mpd^2 (2\pi)^6 \fmi{3}{h} } P_0(h) \, \frac{\vec{h} \cdot \vec{K}_{\{s_2\}}}{h^2}  \notag \\
 &=-\alpha \mpd^2 (2\pi)^3 \dirac\left(\sum_{s=1}^n \, \vec{k}_s\right) P_0\left(\big| \vec{k}_{\{s_1\}} \big|\right) 
      \frac{\vec{k}_{\{s_1\}} \cdot \vec{K}_{\{s_2\}}}{\left(\vec{k}_{\{s_1\}}\right)^2} \;.
\label{eq:04-31}
\end{align}
Observe that the argument of $P_0$ in both expressions can be exchanged between the two label sets by using the Dirac delta distribution in front.
At this point the pattern should become clear, hence the only other diagram we will compute explicitly is the first diagram of Fig.~\ref{fig:04-4} in order to show the
combined usage of \eqref{eq:04-21} and the linear term in \eqref{eq:04-27} which give the first term in parentheses in the first line of \eqref{eq:04-32}.
Due to the symmetry of the diagram both labelings give the same result.
\begin{widetext}
\begin{align}
     & 2 \frac{1}{2} \mpd^2 \umi{3}{q^{(\ii)}_a} \umi{3}{q^{(\ii)}_b} \e^{-\ii \, \vec{q}^{\,(\ii)}_a \cdot \vec{k}_{\{s_1\}}} \, \e^{-\ii \, \vec{q}^{\,(\ii)}_b \cdot \vec{k}_{\{s_2\}}} \left(-\alpha^2 \fmi{3}{h_1} \e^{\ii \, \vec{h}_1 \cdot \left(\vec{q}^{\,(\ii)}_a - \vec{q}^{\,(\ii)}_b\right)} \frac{ \vec{K}_{\{s_1\}} \cdot \vec{h}_1}{h_1^2} \, \frac{ \vec{K}_{\{s_2\}}  \cdot \vec{h}_1}{h_1^2} P_0(h_1)\right) \left(\fmi{3}{h_2} \e^{\ii \, \vec{h}_2 \cdot \left(\vec{q}^{\,(\ii)}_a - \vec{q}^{\,(\ii)}_b\right)} \, P_0(h_2) \right) \notag \\
 =\, & -\alpha^2 \mpd^2 (2\pi)^6 \fmi{3}{h_1} \fmi{3}{h_2} \dirac\left(\vec{h}_1 + \vec{h}_2 - \vec{k}_{\{s_1\}} \right) \dirac\left(\vec{h}_1 + \vec{h}_2 + \vec{k}_{\{s_2\}} \right) P_0(h_1) \, P_0(h_2) 
      \frac{ \vec{K}_{\{s_1\}} \cdot \vec{h}_1}{h_1^2} \, \frac{ \vec{K}_{\{s_2\}} \cdot \vec{h}_1}{h_1^2} \notag \\ 
 =\, & -\alpha^2 \mpd^2 (2\pi)^3 \dirac\left(\sum_{s=1}^n \, \vec{k}_s\right) \fmi{3}{h} P_0(h) \, P_0\left(\big| \vec{h} - \vec{k}_{\{s_1\}} \big|\right)
      \frac{ \vec{K}_{\{s_1\}} \cdot \vec{h}}{h^2} \, \frac{ \vec{K}_{\{s_2\}} \cdot \vec{h}}{h^2}
\label{eq:04-32}
\end{align}
\end{widetext}
In the last line we renamed $\vec{h}_1 \rightarrow \vec{h}$. The damping function is easily calculated to be
\begin{equation}
\mathfrak{D}_{(\{s_1\},\{s_2\})} = \nexp{-\frac{\sigma_p^2}{2} \left( \vec{K}_{\{s_1\}}^2 + \vec{K}_{\{s_2\}}^2 \right)} \;.
\label{eq:04-33}
\end{equation}
Once all diagrams have been calculated we gather the results into two functions as described in our strategy of section \ref{org_comb}. The general 2-particle $n$-point density
cumulant up to second order in the initial correlations then reads
\begin{align}
 G^{(0,2)}_{\rho_1 \ldots \rho_n} =& \, \mpd^2 (2\pi)^3 \dirac\left(\sum_{j=1}^n \vec{k}_j\right) \, \sum_{m=1}^{\lfloor n/2 \rfloor} \sum_{\substack{ \{s_1\},\{s_2\} \\ \#\{s_1\} = m \\ \#\{s_2\} = n-m} } \notag \\
                                                & \left( \mathcal{T}^{(2,1)}_{\#\{s_1\} | \#\{s_2\}}(\{s_1\};\{s_2\}) + \mathcal{T}^{(2,2)}_{\#\{s_1\} | \#\{s_2\}}(\{s_1\};\{s_2\}) \right) \;.
\label{eq:04-34}
\end{align}
In this, $\mathcal{T}^{(2,1)}_{\#\{s_1\} | \#\{s_2\}}$ contains all results from diagrams of first order in $P_0$ found in Fig.~\ref{fig:04-3} and $\mathcal{T}^{(2,2)}_{\#\{s_1\} | \#\{s_2\}}$ contains the results
from the second order diagrams of figure \ref{fig:04-4}. Both can be found in Appendix \ref{appA}. Their length might look daunting at first. However, they reduce
considerably when we consider explicit small $n$-point cumulants. For $n=1$ we have two terms in \eqref{eq:04-14} where in each of them one of the two particles does not
carry an external label. Theorem \ref{theo:04-3} then tells us that we have
\begin{equation}
 G^{(0,2)}_{\rho_1} = 0 \quad \text{and} \quad G^{(0,2)}_{B_1} = 0 \;.
\label{eq:04-35}
\end{equation}
In the $n=2$ case there is trivially only one possible distribution, namely $1|1$ and the only grouping is $(1;2)$. The overall Dirac delta 
allows us to set $\vec{k}_2 = -\vec{k}_1$. The $\mathcal{T}^{(2,2)}_{1|1}$ function thus reduces to
\begin{equation}
 \mathcal{T}^{(2,1)}_{1|1}(1;2) = P_0(k_1) \, (1 +  \alpha \g_1) \, (1 + \alpha \g_2) \, \nexp{-\frac{\sigma_p^2}{2} \, k_1^2 \, (\g_1^2 + \g_2^2)}
\label{eq:04-36}
\end{equation}
which is the general form of Eq.~(44) in \citet{Bartelmann2014a}. This contribution corresponds to the linear power spectrum of the standard Euler-Poisson system and is
equivalent to it if one neglects the damping and uses Zel'dovich trajectories. The $\mathcal{T}^{(2,2)}_{1|1}$ term is thus a generalization of Eq.~(48) in \citep{Bartelmann2014a}
which only contained the fourth diagram of Fig.~\ref{fig:04-4}.
\begin{gather}
 \mathcal{T}^{(2,2)}_{1|1}(1;2) =  \nexp{-\frac{\sigma_p^2}{2} \, k_1^2 (\g_1^2 + \g_2^2)} \fmi{3}{h} P_0\left(\big| \vec{k}_1 - \vec{h} \big|\right) \, \notag \\ 
 P_0(h) \left\{ \alpha^2 \g_1 \g_2 \left(\frac{\vec{k}_1 \cdot \vec{h}}{h^2}\right)^2 
 \left( 1 +\alpha \, (\g_1 + \g_2) \frac{\vec{k}_1 \cdot \left(\vec{k}_1 - \vec{h}\right)}{\left(\vec{k}_1 - \vec{h}\right)^2} \right. \right. \notag \\
   + \left. \left. \frac{\alpha^2}{2} \g_1 \g_2 \left( \frac{\vec{k}_1 \cdot \left(\vec{k}_1 - \vec{h}\right)}{\left(\vec{k}_1 - \vec{h}\right)^2} \right)^2 \right) 
   + \alpha^2 \g_1 \g_2 \left(\frac{\vec{k}_1 \cdot \vec{h}}{h^2}\right) \left( \frac{\vec{k}_1 \cdot \left(\vec{k}_1 - \vec{h}\right)}{\left(\vec{k}_1 - \vec{h}\right)^2} \right) \right\}
\label{eq:04-37}
\end{gather}
This term represents the coupling of modes at the initial time which are then transported forward by the free propagator. An interesting question for future work would be
whether comparable terms can be found in the non-interacting limit of the Euler-Poisson system or if \eqref{eq:04-37} already contains effects of multi-streaming.

With all 2-particle density-only correlators described by \eqref{eq:04-34} we can now address mixed correlators of $\rho$ and $B$. For the 2-point cumulant we directly
infer from Theorem \ref{theo:04-3} that
\begin{equation}
 G^{(0,2)}_{\rho_1 B_2} = 0 = G^{(0,2)}_{B_1 \rho_2} \;.
\label{eq:04-38}
\end{equation}
Let us thus look at the 3-point cumulant. There the only label distribution is $1|2$ with three possible label groupings so without specifying $\mathcal{T}_{1|2}^{(2)}$ we have
\begin{align}
G^{(0,2)}_{\rho_1 \rho_2 \rho_3}  =& \; \mpd^2 (2\pi)^3 \delta\left(\vec{k}_1 + \vec{k}_2 + \vec{k}_3\right) \notag \\
                                   &  \left( \mathcal{T}_{1|2}^{(2)}(1;2,3) + \mathcal{T}_{1|2}^{(2)}(2;1,3) + \mathcal{T}_{1|2}^{(2)}(3;1,2) \right) \;.
\label{eq:04-39}
\end{align}
Replacing the last density field with a $B$-field each of the $\mathcal{T}_{1|2}^{(2)}$ gets a $b(3)$ factor in front of it which depends on where $3$ is placed in the overall
grouping. This leads to
\begin{align}
G^{(0,2)}_{\rho_1 \rho_2 B_3} =&\; \ii \, \mpd^2 (2\pi)^3 \delta\left(\vec{k}_1 + \vec{k}_2 + \vec{k}_3\right) \notag \\
                                             &  \left( \vec{k}_3 \cdot \vec{k}_2 \, \g_{23} \, \mathcal{T}_{1|2}^{(2)}(1;2,3) + \vec{k}_3 \cdot \vec{k}_1 \, \g_{13} \, \mathcal{T}_{1|2}^{(2)}(2;1,3) \right) \;.
\label{eq:04-40}
\end{align}
The third term $\mathcal{T}_{1|2}^{(2)}(3;1,2)$ vanished due to Theorem \ref{theo:04-1} as one particle only carried the $B$-field label $3$. If we add one more $B$-field
this holds for all three terms, i.e.~we directly see Corollary \ref{theo:04-4} in action and thus $G^{(0,2)}_{\rho B B}=0$ . This implies a general strategy for
how to calculate any mixed correlator.
\begin{itemize}
 \item Calculate the $\mathcal{T}$ functions for the corresponding $n$-point density-only cumulant.
 
 \item Replace density by response fields as desired. For each label distribution identify then those label groupings where there are groups of only $B$-field labels and drop their
       corresponding $\mathcal{T}$-functions.
 
 \item Place $b(s)$-factors appropriate to the number of $B$-fields in front of all remaining $\mathcal{T}$-functions, where the $b(s)$ can be directly read off from the
       grouping.
\end{itemize}

\subsection{3-particle cumulants}

While technically possible, writing down general $n$-point cumulants becomes infeasible quite quickly for more than two particles. For the $\ell=3$
case we will contend ourselves with the 3-point cumulants. 4-point cumulants are given in Appendix \ref{appB}. In terms of diagrams we will only consider the lowest possible order of
$\mathcal{O}(P_0^2)$ which already has 36 diagrams. We can again reduce this number significantly by combining all non-invariant
diagrams of the same topology leading to the form of $\hat{\mathcal{C}}^{(3)}_{\mathrm{con}}$ shown in figure \ref{fig:04-5} with only 7 diagrams left.
\begin{figure}[htp]
 \centering
 \includegraphics[scale=0.5]{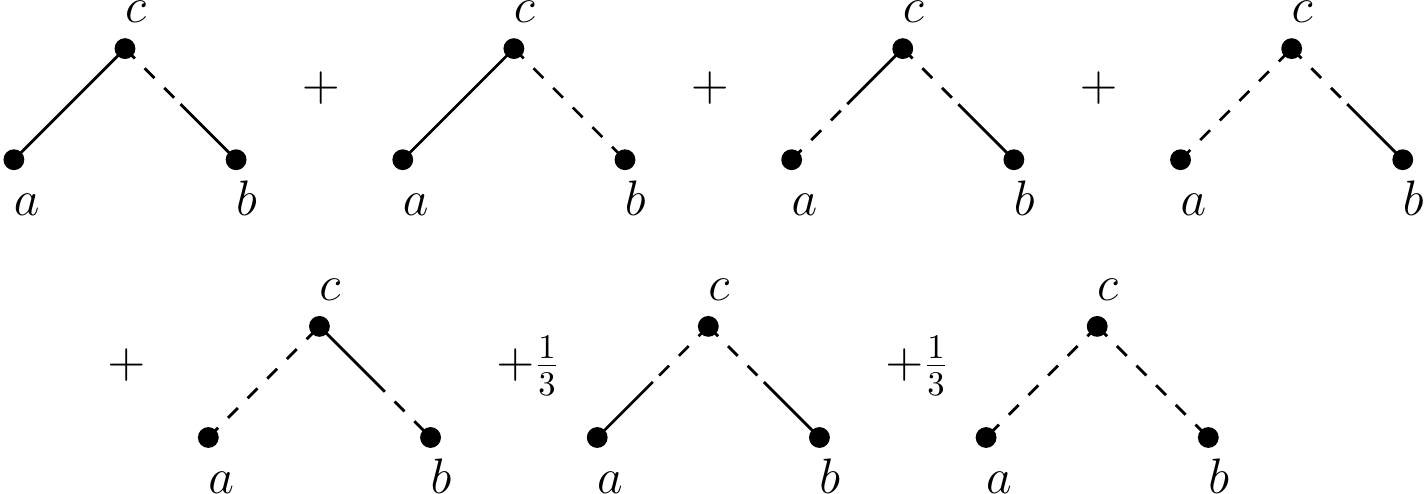}
 \caption{All diagrams of order $\mathcal{O}(P_0^2)$ contributing to $\hat{\mathcal{C}}^{(3)}_{\mathrm{con}}$ after combining those of equal topology.}
\label{fig:04-5}
\end{figure}
For the 3-point cumulant the only possible index distribution is $1|1|1$ and the only grouping subsequently $(1;2;3)$. For each of the 7 diagrams one thus needs to evaluate
$3!=6$ terms leading to a total of 42 individual contributions which make up 
\begin{align}
 \mathcal{T}_{1|1|1}^{(3,2)}(1;2;3) &= \nexp{-\frac{\sigma_p^2}{2} \, \left(\vec{K}_1^2  + \vec{K}_2^2 + \vec{K}_3^2 \right)} \notag \\
                                    &\phantom{{}={}} \Bigg\{ P_0(k_1) \, P_0(k_2) \, (1+\alpha \g_1) (1+\alpha \g_2) \notag \\
                                    &\phantom{{}={}\Bigg\{} \left[\alpha^2 \frac{\vec{K}_3 \cdot \vec{k}_1}{k_1^2} \frac{\vec{K}_3 \cdot \vec{k}_2}{k_2^2} - \vec{K}_3 \cdot \left(\frac{\vec{k}_1}{k_1^2} + \frac{\vec{k}_2}{k_2^2}\right) \right] \notag \\
                                    &\phantom{{}={}\Bigg\{} + \mathrm{cyc. \, perm.} \Bigg\} \;.
\label{eq:04-41}
\end{align}
Due to Corollary \ref{theo:04-4} the density-only cumulant is the only 3-particle 3-point cumulant.
\begin{equation}
 G^{(0,3)}_{\rho\rho\rho}(1,2,3) = \mpd^3 (2\pi)^3 \dirac\left(\vec{k}_1 + \vec{k}_2 + \vec{k}_3\right) \, \mathcal{T}_{1|1|1}^{(3,2)}(1;2;3)
\label{eq:04-42}
\end{equation}
This is again a generalization of a result from \citep{Bartelmann2014a}, namely for the bispectrum given in Eq.~(77).

\section{Summary}

This paper extends the statistical field theory for classical particles pioneered by \citeauthor{Das2012} and adapted to correlated initial conditions in
\citet{Bartelmann2014} from the canonical to the grand canonical ensemble. We were able to do this in the case of statistically homogeneous and isotropic systems by using the same
conceptual link between canonical and grand canonical ensembles as in standard statistical physics and found some generalizations of results derived in
\citep{Bartelmann2014,Bartelmann2014a}. Our main results are:
\begin{itemize}
 \item Initial correlations between the degrees of freedom of the individual particles can be brought into the form of correlation operators acting on the free generating functional
       of an ideal gas. These operators can be expressed in a simple diagrammatic representation.
       
 \item Using a variant of the Mayer cluster expansion the sum over particle numbers can be transformed into an exponential of the sum over connected $\ell$-particle generating
       functionals. Having only connected cumulants as the intrinsic building blocks of perturbation theory and having some general theorems for them
       will be advantageous in the study of the interacting grand canonical ensemble.
       
 \item The ordering of the generating functional by $\ell$ representative connected particles together with the diagrammatic representation of initial correlations
       allows a structured scheme when calculating free cumulants, reducing the combinatorical efforts necessary when compared with the canonical approach.
       With the help of this scheme we derived generalizations of the cumulants found in \citet{Bartelmann2014a}.
\end{itemize}
Building upon these, our next paper \citep{Fabis2015a} will focus on generalizing the self-consistent grand canonical perturbation theory of
\citep{Mazenko2010,Mazenko2011,Das2012,Das2013} to include initial correlations.

\begin{acknowledgments}
 We wish to thank Luca Amendola, J\"urgen Berges, Robert Lilow, Manfred Salmhofer, Celia Viermann and Christof Wetterich for insightful discussions.
 This work was generously supported in parts by the Transregional Collaborative Research Centre TR~33, ``The Dark Universe'', of the German Science Foundation as well as by
 J\"urgen Berges and the Institute for Theoretical Physics at Heidelberg University.
\end{acknowledgments}

\appendix

\section{2-particle $\mathcal{T}$-functions}\label{appA}

The two functions containing the contribution from a single label grouping $(\{s_1\};\{s_2\})$ of a distribution $\#\{s_1\} | \#\{s_2\}$ to some $n=\#\{s_1\} + \#\{s_2\}$ point
cumulant in first and second order in the initial power spectrum $P_0$ are shown \eqref{eq:A-1} and \eqref{eq:A-2}. In the first order function the first term represents the
first diagram of figure \ref{fig:04-2}, the second and third term the second diagram and the fourth term the third diagram. In the second order function the first term represents
the third diagram of figure \ref{fig:04-3}, the second term the first diagram, the third and fourth term the second diagram and the fifth term the fourth diagram.
\begin{widetext}
\begin{align}
 \mathcal{T}^{(2,1)}_{\#\{s_1\} | \#\{s_2\}}(\{s_1\};\{s_2\}) =\, & \mathfrak{D}_{(\{s_1\},\{s_2\})} \, P_0 \left( \big| \vec{k}_{\{s_1\}} \big| \right)
     \left\{ 1 - \alpha \frac{ \vec{k}_{\{s_1\}} \cdot \vec{K}_{\{s_2\}} }{ \vec{k}_{\{s_1\}}^2} - \alpha \frac{ \vec{k}_{\{s_2\}} \cdot \vec{K}_{\{s_1\}} }{ \vec{k}_{\{s_2\}}^2} 
     +\alpha^2 \frac{ \vec{K}_{\{s_1\}} \cdot \vec{k}_{\{s_1\}} }{ \vec{k}_{\{s_1\}}^2} \, \frac{ \vec{K}_{\{s_2\}}  \cdot \vec{k}_{\{s_2\}}}{ \vec{k}_{\{s_2\}}^2} \right\}
\label{eq:A-1}
\end{align}
\begin{align}
 \mathcal{T}^{(2,2)}_{\#\{s_1\} | \#\{s_2\}}(\{s_1\};\{s_2\}) =\, & \mathfrak{D}_{(\{s_1\},\{s_2\})} \, \fmi{3}{h} P_0(h) \, P_0\left( \big| \vec{k}_{\{s_1\}} - \vec{h} \big| \right)
  \left\{ \alpha^2 \frac{ \vec{K}_{\{s_1\}} \cdot \vec{h}}{h^2} \, \frac{ \vec{K}_{\{s_2\}} \cdot \left( \vec{k}_{\{s_1\}} - \vec{h} \right)}{\left( \vec{k}_{\{s_1\}} -\vec{h}\right)^2} -\alpha^2 \frac{ \vec{K}_{\{s_1\}} \cdot \vec{h}}{h^2} \, \frac{ \vec{K}_{\{s_2\}} \cdot \vec{h}}{h^2}  \right. \notag \\
   &  \left. \left(1 + \alpha \, \frac{ \vec{K}_{\{s_1\}} \cdot \left( \vec{k}_{\{s_1\}} - \vec{h} \right) }{\left( \vec{k}_{\{s_1\}} -\vec{h}\right)^2} - \alpha \, \frac{ \vec{K}_{\{s_2\}} \cdot \left( \vec{k}_{\{s_1\}} - \vec{h} \right) }{\left( \vec{k}_{\{s_1\}} - \vec{h}\right)^2} 
         - \frac{\alpha^2}{2} \frac{ \vec{K}_{\{s_1\}} \cdot \left( \vec{k}_{\{s_1\}} - \vec{h}\right) }{\left( \vec{k}_{\{s_1\}} - \vec{h}\right)^2} \, \frac{ \vec{K}_{\{s_2\}} \cdot \left( \vec{k}_{\{s_1\}} - \vec{h}\right) }{\left( \vec{k}_{\{s_1\}} - \vec{h}\right)^2} \right)  \right\} \notag \\ 
\label{eq:A-2}
\end{align}
\end{widetext}

\section{3-particle 4-point cumulants}\label{appB}
The 4-point cumulant has only one index distribution $1|1|2$ but $\binom{4}{2} = 6$ different groupings. The $\mathcal{T}$ function can in principle be read off from the
3-point case of \eqref{eq:04-41} by replacing one of the single labels by a set of 2 labels and is given in \eqref{eq:A-5}.
The 4-point pure density cumulant up to second order in initial correlations then is
\begin{gather}
 G^{(0,3)}_{\rho_1 \rho_ 2 \rho_3 \rho_4} = \; \mpd^3 (2\pi)^3 \dirac\left(\sum_{j=1}^4 \, \vec{k}_j\right) \times \displaybreak[0]\notag \\
                                                      \left[ \mathcal{T}_{1|1|2}^{(3,2)}(1;2;3,4) + \mathcal{T}_{1|1|2}^{(3,2)}(1;3;2,4) + \mathcal{T}_{1|1|2}^{(3,2)}(1;4;2,3) \right. \displaybreak[0] \notag \\
                                                      \left. + \, \mathcal{T}_{1|1|2}^{(3,2)}(2;3;1,4) +  \mathcal{T}_{1|1|2}^{(3,2)}(2;4;1,3) +  \mathcal{T}_{1|1|2}^{(3,2)}(3;4;1,2) \right] \;.
\label{eq:A-3}
\end{gather}
Notice that this is again a shot-noise like contribution due to the $\mpd^3$ scaling.
The only mixed 3-particle 4-point cumulant is
\begin{gather}
 G^{(0,3)}_{\rho_1 \rho_2 \rho_3 B_4} = \ii \, \mpd^3 (2\pi)^3 \dirac\left(\sum_{j=1}^4 \, \vec{k}_j\right) \left[ \vec{k}_4 \cdot \vec{k}_3 \, g_{34} \, \mathcal{T}_{1|1|2}^{(3,2)}(1;2;3,4) \right. \notag \\
  \left. + \, \vec{k}_4 \cdot \vec{k}_2 \, g_{24} \, \mathcal{T}_{1|1|2}^{(3,2)}(1;3;2,4) + \vec{k}_4 \cdot \vec{k}_1 \, g_{14} \, \mathcal{T}_{1|1|2}^{(3,2)}(2;3;1,4) \right] \;.
\label{eq:A-4}
\end{gather}
\begin{widetext}
\begin{gather}
 \mathcal{T}_{1|1|2}^{(3,2)}(1;2;3,4) = \, \Bigg\{ P_0(k_1) \, P_0(k_2) \, (1+\alpha \g_1) (1+\alpha \g_2) \left[\alpha^2  \frac{\left(\vec{K}_3 + \vec{K}_4\right) \cdot \vec{k}_1}{k_1^2}  \frac{\left( \vec{K}_3 + \vec{K}_4\right) \cdot \vec{k}_2}{k_2^2}   - \alpha \, \left( \vec{K}_3 + \vec{K}_4\right) \cdot \left( \frac{ \vec{k}_1}{k_1^2} + \frac{\vec{k}_2}{k_2^2} \right) \right] \notag \\
  + \left[  P_0(k_1) (1 + \alpha \g_1) P_0\left( \big| \vec{k}_3 + \vec{k}_4 \big| \right) \left( 1 + \alpha \left( \vec{K}_3 + \vec{K}_4\right) \cdot \frac{\left(\vec{k}_3 + \vec{k}_4 \right)}{\left(\vec{k}_3 + \vec{k}_4 \right)^2} \right) \left( \alpha^2 \, \frac{\vec{K}_2 \cdot \vec{k}_1}{k_1^2} \frac{\vec{K}_2 \cdot \left(\vec{k}_3 + \vec{k}_4\right)}{\left(\vec{k}_3 + \vec{k}_4 \right)^2} - \alpha \, \frac{\vec{K}_2 \cdot \vec{k}_1}{k_1^2} - \alpha \, \frac{\vec{K}_2 \cdot \left(\vec{k}_3 + \vec{k}_4 \right)}{\left(\vec{k}_3 + \vec{k}_4 \right)^2} \right) \right. \notag \\
  +  (1 \leftrightarrow 2) \Bigg] \Bigg\} \, \nexp{-\frac{\sigma_p^2}{2} \, \left(\vec{K}_1^2 + \vec{K}_2^2 + \left( \vec{K}_3 + \vec{K}_4 \right)^2 \right)}
\label{eq:A-5}
\end{gather}
\end{widetext}

\bibliography{Bibliography}
\end{document}